\newtheorem{proposition}{Proposition}
\newtheorem{example}{Example}
\begin{document}

\title{Estimating multi-class dynamic origin-destination demand through a forward-backward algorithm on  computational graphs}

\author{Wei Ma, Xidong Pi, Zhen (Sean) Qian\\
	Department of Civil and Environmental Engineering\\ Carnegie Mellon University, Pittsburgh, PA 15213\\
	\{weima, xpi, seanqian\}@cmu.edu}
\maketitle

\begin{abstract}

Transportation networks are unprecedentedly complex with heterogeneous vehicular flow. Conventionally, vehicle classes are considered by vehicle classifications (such as standard passenger cars and trucks). However, vehicle flow heterogeneity stems from many other aspects in general, e.g., ride-sourcing vehicles versus personal vehicles, human driven vehicles versus connected and automated vehicles. Provided with some observations of vehicular flow for each class in a large-scale transportation network,  how to estimate the multi-class spatio-temporal vehicular flow, in terms of time-varying Origin-Destination (OD) demand and path/link flow, remains a big challenge.  This paper presents a solution framework for multi-class dynamic OD demand estimation (MCDODE) in large-scale networks. The proposed framework is built on a computational graph with tensor representations of spatio-temporal flow and all intermediate features involved in the MCDODE formulation. A forward-backward algorithm is proposed to efficiently solve the MCDODE formulation on computational graphs. In addition, we propose a novel concept of tree-based cumulative curves to estimate the gradient of OD demand. A Growing Tree algorithm is developed to construct tree-based cumulative curves. The proposed framework is examined on a small network as well as a real-world large-scale network. The experiment results indicate that the proposed framework is compelling, satisfactory and computationally plausible.  

\end{abstract}

\section{Introduction}

Transportation networks are unprecedentedly complex with heterogeneous vehicular flow. Conventionally, vehicle heterogeneity are considered by vehicle classifications (such as standard passenger cars and trucks). However, vehicle flow heterogeneity stems from many other aspects in general, e.g., ride-sourcing vehicles versus personal vehicles, human driven vehicles versus connected and automated vehicles. How to effectively manage the multi-class vehicles in a complex transportation system so as to improve the network efficiency presents a big challenge. As an indispensable component of dynamic transportation network models  with heterogeneous traffic, the multi-class dynamic origin-destination (OD) demand plays a key role in transportation planning and management to understand spatio-temporal vehicular flow and its travel behavior. To our best knowledge, there is a lack of studies to understand and estimate the dynamic OD demand of multiple vehicle classes using sparse and partial flow observations. In view of this, this paper presents a data-driven framework for multi-class dynamic OD demand estimation (MCDODE) in large-scale networks. The MCDODE formulation is represented on a computational graph, and a novel forward-backward algorithm is proposed to estimate the OD demand efficiently and effectively. The proposed MCDODE framework is examined in a small networks as well as a real-world large-scale network to demonstrate the estimation accuracy and the computational efficiency.

The multi-class dynamic OD demand (MCDOD) represents the number of vehicles in each of the vehicle classes ({\em e.g.} personal cars, trucks, ride-sourcing vehicles) departing from an origin and heading to a destination in a particular time interval. The definition of ``classes'' is very general, by vehicle sizes, specifications, and nature of trips. The MCDOD reveals the fine-gained traffic demand information for different vehicle classes and the overall spatio-temporal mobility patterns can be inferred from the OD demand and its resultant path/link flows.  Policymakers can understand the departure/arrival patterns of multi-class vehicles through the MCDOD. The MCDOD also helps the policymakers understand the impact of each vehicle class on the roads, and hence each class of vehicles can be managed separately. In addition, most of Advanced Traveler Information Systems/Advanced Traffic Management Systems (ATIS/ATMS) would require accurate MCDOD as the model input \citep{huang2007multiclass}.

The dynamic OD estimation (DODE) has been extensively studied over the past few decades. A generalized least square (GLS) formulation is proposed for estimating dynamic OD demand with exogenous route choice model \citep{cascetta1993dynamic}. The GLS formulation is further extended to a bi-level optimization problem, and the bi-level formulation solves for the DODE with endogenous route choice model \citep{tavana2001internally}. Advantages and disadvantages of the bi-level formulation  are discussed in \cite{nguyen1977estimating, leblanc1982selection, fisk1989trip, yang1992estimation, florian1995coordinate, jha2004development}. The Stochastic Perturbation Simultaneous Approximation (SPSA) methods have also been adopted to solve the bi-level formulation in many studies \citep{balakrishna2008time, cipriani2011gradient, lee2009new, vaze2009calibration, ben2012dynamic, lu2015enhanced, tympakianaki2015c, antoniou2015w}.  \citet{zhang2017efficient, osorio2019dynamic} proposed an instantaneous approximation for the dynamic traffic assignment models, and the DODE problem is cast into a convex optimization which can be solved efficiently. The bi-level formulation can also be relaxed to a single-level problem and solved by updating the OD demand using gradient-based methods \citep{nie2008variational, lu2013dynamic}.


The DODE problem can be viewed as a statistical estimation problem. For example, a statistical inference framework using Markov Chain Monte Carlo algorithm was proposed to estimate the probabilistic OD demand \citep{hazelton2008statistical}. \citet{zhou2003dynamic} proposed a DODE framework with multi-day data and established a hypothesis testing framework to identify the demand evolution within a week.
In addition to the off-line DODE methods, the DODE formulation can also be solved with real-time data streams for ATIS/ATMS applications. \citet{bierlaire2004efficient} extended the GLS formulation on large-scale networks and solved the formulation efficiently on a real-time basis. The state-space models are also used to estimate the dynamic OD  of each time interval sequentially \citep{zhou2007structural, ashok2000alternative}.

The DODE methods often encapsulate the dynamic traffic assignment (DTA) models within the bi-level formulation. Various DTA models can be adopted to model the network dynamics and travelers' behaviors. For example, Dynamic Network Loading (DNL) models \citep{ma2008polymorphic} simulate the vehicle trajectories and traffic conditions given determined the origin-destination (O-D) demand and fixed routes; Dynamic User Equilibrium (DUE) models \citep{mahmassani1984dynamic, nie2010solving} search for the equilibrated travelers' route choices to achieve user optima; Dynamic System Optimal (DSO) models \citep{shen2007path, qian2012system, ma2014continuous} explore the optimal conditions under which the total network costs are minimized.

The multi-class traffic assignment models and OD estimation models can be built by extending the single-class models. However, the main challenges are how to estimate MCDOD in such a way to match multi-source spatio-temporal data in a large-scale transportation network. A number of studies \citep{gundaliya2008heterogeneous, venkatesan2008development, qian2017modeling} investigate the multi-class traffic flows in general networks. The multi-class DTA models are studied by \citet{dafermos1972traffic, yang2004multi, huang2007multiclass} without the real-world data validation. There are studies estimating the static OD demand for multiple vehicle classes \citep{wong2005estimation, raothanachonkun2006estimation, noriega2007multi, zhao2018multiclass}. In particular, \citet{shao2015estimation} estimated the probabilistic distribution of the multi-class OD using traffic counts. The research gap lies in estimating MCDOD in large-scale networks that are consistent with real-world multi-source traffic data.

In this paper, we develop a data-driven framework that estimates the multi-class dynamic OD demand using traffic counts and travel time data that are partially observed in general networks. The proposed framework formulates the MCDODE problem and represents it with a computational graph. The MCDODE is solved with a novel forward-backward algorithm on the computational graph. The MCDODE framework can be solved using multi-core CPUs or Graphics Processing Units (GPUs), and hence the proposed method can be computational efficient and applied to the large-scale networks and multi-day data.  The closest work to this paper is that of \citet{wu2018hierarchical}, which constructs a layered computational graph for the single-class static OD estimation problem. This paper extends the computational graph approach to the case of multi-class dynamic OD demand by tackling additional challenges on the incorporating flow dynamics and characteristics across both classes and time of day. We also build a novel framework to evaluate the gradients of multi-class OD demand for simulation-based traffic assignment models. The main contributions of this paper are summarized as follows:
\begin{enumerate}[label=\arabic*)]
	\item We propose a  theoretical formulation for estimating multi-class dynamic OD demand. The formulation is represented on a computational graph such that the MCDODE can be solved for large-scale networks with large-scale traffic data. The proposed MCDODE formulation can handle any form of traffic data, such as flow, speed or trip cost.
	\item We propose a novel forward-backward algorithm to solve for the MCDODE formulation on the constructed computational graph with simulation-based traffic assignment models.
	\item We use a tree-based cumulative curves to evaluate the gradient of multi-class dynamic OD demand in the forward-backward algorithm, and a Growing Tree algorithm is proposed to construct the tree-based cumulative curves during the traffic simulation.
	\item We examine the proposed MCDODE framework on a large-scale network to demonstrate its effectiveness and computational efficiency of the solution algorithm.
\end{enumerate}

The remainder of this paper is organized as follows.  Section \ref{sec:formulation} presents the formulation details of MCDODE. Section \ref{sec:solution} describes the solution pipeline and discusses practical issues for MCDODE framework. In section~\ref{sec:exp}, a small network is used to demonstrate the accuracy, efficiency and robustness of the MCDODE framework. In addition, we demonstrate the scalability and computational efficiency of the proposed framework using real-world data in a large-scale network. At last, conclusions are drawn in section \ref{sec:con}.

\section{Formulation}
\label{sec:formulation}
In this section, we discuss the multi-class dynamic origin-destination estimation (MCDODE) framework. We first present the notations used in this paper, and then the multi-class dynamic network flow  is modeled. Secondly, the MCDODE problem is formulated and represented on a computational graph. We then solve the MCDODE through a novel forward-backward algorithm.

\subsection{Notations}
Notations are summarized in Table~\ref{tab:notation}.
\begin{longtable}{p{3cm}p{10cm}}
	\caption{\footnotesize List of notations}
	\label{tab:notation}
	\endfirsthead
	\endhead	
	$A$ & The set of all links\\
	$K_q$ & The set of all OD pairs\\
	$K_{rs}$ & The set of all paths between OD pair $rs$\\
	$B$ & The set of indices of the observed flow\\
	$E$ & The set of indices of the observed travel time\\
	$H$ & The set of all time intervals\\
	$D$ & The set of vehicle classes\\
	\multicolumn{2}{c}{\textbf{Variables as scalars}}\\
	$h_1$ & The index of departure time interval of path flow or OD flow\\
	$h_2$ & The index of arrival time interval at the tail of link\\
	$i$ & index of vehicle class\\
	
	$x_{ai}^{h_2}$ & The flow arriving at the tail of link $a$ in time interval $h_2$ for vehicle class $i$\\
	$t_{ai}^{h_2}$ & The link travel time of link $a$ in time interval $h_2$ for vehicle class $i$\\
	$q_{rsi}^{h_1}$ & The flow of OD pair $rs$ in time interval $h_1$ for vehicle class $i$\\
	$f_{rsi}^{kh_1}$ & The $k$th path flow for OD pair $rs$ in time interval $h_1$ for vehicle class $i$\\
	$p_{rsi}^{kh_1}$ & The route choice portion of choosing path $k$ in all paths between OD pair $rs$ in time interval $h_1$ for vehicle class $i$\\
	$c_{rsi}^{kh_1}$ & The path travel time of $k$th path for OD pair $rs$ in time interval $h_1$ for vehicle class $i$\\	
	$L_{ai}^{bh_2}$ & The observation/link incidence for link $a$ in time interval $h_2$ and observed flow $b$ for vehicle class $i$\\
	$M_{ai}^{eh_2}$ & The weight of travel time for $t_{ai}^{h_2}$ in the observed travel time $e$\\
	$\rho_{rsi}^{ka}(h_1, h_2)$ & The portion of the $k$th path flow departing within time interval $h_1$ between OD pair $rs$ which arrives at link $a$ within time interval $h_2$  for vehicle class $i$ (namely, an entry of the DAR matrix)\\
	$y_b$ & The $b$th observed flow, which might be a linear combination of link flow across vehicle classes, road segments and over time intervals\\
	$z_e$ & The $e$th observed travel time, which might be a linear combination of link travel time across vehicle classes, road segments and over time intervals\\
	\multicolumn{2}{c}{\textbf{Variables as tensors}}\\
	$\vec{y}$ & The vector of the observed flow\\
	$\vec{z}$ & The vector of the observed travel time\\
	$\vec{x}_i$ & The vector of link flow for vehicle class $i$\\
	$\vec{t}_i$ & The vector of link travel time for vehicle class $i$\\
	$\vec{f}_i$ & The vector of path flow for vehicle class $i$\\
	$\vec{c}_i$ & The vector of path travel time for vehicle class $i$\\
	$\vec{L}_i$ & The observation/link incidences matrix for vehicle class $i$\\
	$\vec{M}_i$ & The travel time weight matrix for vehicle class $i$\\
	$\vec{p}_i$ & The matrix of route choice portions for vehicle class $i$\\
	$\mathbf{\rho}_i$ & The dynamic assignment ratio (DAR) matrix for vehicle class $i$\\
\end{longtable}

\subsection{Modeling multi-class dynamic network flow}
In this section, we first formulate a general network model for multi-class traffic flow in discrete time. We denote the path flow $f_{rsi}^{kh_1}$ as the number of class-$i$ vehicles  departing from $k$th path for OD pair $rs$ in time interval $h_1$ and link flow $x_{ai}^{h_2}$ as the number of class-$i$ vehicles arriving at the tail of link $a$ in time interval $h_2$. The relationship between path flow and link flow is presented in Equation~\ref{eq:linkpath}.

\begin{equation}
\label{eq:linkpath}
x_{ai}^{h_2}
=\sum_{rs \in K_q} \sum_{k \in K_{rs}} \sum_{h_1 \in H}   \rho_{rsi}^{ka}(h_1, h_2) f_{rsi}^{kh_1}
\end{equation}
where $K_q$ is the set of all OD pairs, and $K_{rs}$ is the path set for OD pair $rs$. $H$ is the set of all possible time intervals during study period. To avoid confusion, we denote departure time interval of path flow or OD flow as $h_1$, and the arrival time interval at the tail of link as $h_2$, respectively. The dynamic assignment ratio (DAR) $\rho_{rsi}^{ka}(h_1, h_2)$ denotes the portion of the $k$th path flow departing within time interval $h_1$ for OD pair $rs$ which arrives at link $a$ within time interval $h_2$ for vehicle class $i$ \citep{ma2018estimating}. For each OD pair $rs$, there are $K_{rs}$ paths for travelers to choose from, and the portion of choosing path $k$ in time interval $h_1$ for vehicle class $i$ is denoted as $p_{rsi}^{kh_1}$. The OD flow and path flow for vehicle class $i$ can be represented by Equation~\ref{eq:flow}.
\begin{equation}
\label{eq:flow}
f_{rsi}^{kh_1} =  p_{rsi}^{kh_1} q_{rsi}^{h_1}
\end{equation}
where OD demand $q_{rsi}^{h_1}$ represents the number of class-$i$ vehicles for OD pair $rs$ in time interval $h_1$. The link travel time, path travel time and DAR can be obtained from the dynamic network loading (DNL) models, as presented in Equation~\ref{eq:dnl}.

\begin{eqnarray}
\label{eq:dnl}
\left\{t_{ai}^{h_2}, c_{rsi}^{kh_1}, \mathbf{\rho}_{rsi}^{ka}(h_1, h_2) \right\} _{r,s,i,k,a,h_1, h_2} =  \Lambda(\{f_{rsi}^{kh_1}\}_{i,r,s,k,h_1})
\end{eqnarray}
where $\Lambda(\cdot)$ represents the multi-class dynamic network loading models. The travel time can be generalized to any form of the disutility as long as it can be simulated by $\Lambda$. The generalized travel time can include roads tolls, left turn penalty, travelers' preferences and so on.
The DNL function $\Lambda$ takes the multi-class path flow as input and outputs the spatio-temporal network conditions $\{t_{ai}^{h_2}, c_{rsi}^{kh_1} \}$ and the DAR matrix $\mathbf{\rho}_{rsi}^{ka}(h_1, h_2)$. Though there exists analytical solutions in small networks, $\Lambda$ is usually represented by simulation-based models in large-scale networks. Many existing models including, but are not limited to, DynaMIT \citep{ben1998dynamit}, DYNASMART \citep{mahmassani1992dynamic}, DTALite \citep{zhou2014dtalite} and MAC-POSTS \citep{qian2016dynamic, CARTRUCK}, can be potentially used as function $\Lambda$.

The route choice portion $p_{rsi}^{kh_1}$ is obtained from a generalized route choice function, as presented in Equation~\ref{eq:gen_choice}.

\begin{equation}
\label{eq:gen_choice}
p_{rsi}^{kh_1}= \Psi_{rsi}^k\left( \tilde{\vec{c}}_{rsi}^{k h_1}, \tilde{\vec{t}}_{ai}^{h_1} \right)
\end{equation}

\begin{eqnarray}
\tilde{\vec{c}}_{rsi}^{k h_1} = \left\{ c_{rsi}^{k h'}| h' \leq h_1, h' \in H, rs\in K_q, k \in K_{rs} , i \in D\right\}\\
\tilde{\vec{t}}_{ai}^{h_1} = \left\{ t_{ai}^{h'}| h' \leq h_1, h' \in H, a \in A , i \in D\right\}
\end{eqnarray}
where $\Psi_{rsi}^{kh_1}(\cdot. \cdot)$ is a generalized route choice model that takes in the path travel time and link travel time before time interval $h_1$ and computes the route choice portion for travelers in $k$th path of OD $rs$ for vehicle class $i$. We further denote $t_{ai}^{h_2}$ as the travel time of link $a$ in time interval $h_2$ for vehicle class $i$, and $c_{rsi}^{kh_1}$ as the travel time of path flow $k$ in OD pair $rs$ departing in time interval $h_1$. Most of the state-of-art route choice models \citep{prashker2004route, zhou2012c}, including Logit and Probit models, satisfy Equation~\ref{eq:gen_choice}. The Logit and Probit models have been adopted in many studies and achieved great success in real-world applications \citep{maher2001bi}. Combining Equation~\ref{eq:linkpath} and \ref{eq:flow}, we present the relation of link flow and OD flow in Equation~\ref{eq:linkod}.
\begin{equation}
\label{eq:linkod}
x_{ai}^{h_2} =\sum_{rs \in K_q} \sum_{k \in K_{rs}} \sum_{h_1 \in H}   \rho_{rsi}^{ka}(h_1, h_2) p_{rsi}^{kh_1} q_{rsi}^{h_1}
\end{equation}

\subsection{Modeling the observed flow and observed travel time}
In this section, we describe the concept of the observed flow and observed travel time. The motivation for the definition of the observed flow and travel time lies in the indirect and aggregated observations of traffic networks. For example, loop detectors measure the traffic counts in each time interval, but the vehicle classes cannot be differentiated. The Department of Transportation regularly hire individuals to count the vehicles on road segments for both directions and different vehicle classes. In this case, the observation of traffic flow does not differentiate road directions.

To accommodate different kinds of flow observations, we propose the concept of observed flow, denoted by $y_b$, as a linear combination of link flow across vehicle classes, road segments and time intervals. The formulation of the observed flow is presented in Equation~\ref{eq:fo}.

\begin{equation}
\label{eq:fo}
y_{b} = \sum_{i \in D} \sum_{a \in A} \sum_{h_2 \in H}  L_{ai}^{bh_2}  x_{ai}^{h_2}
\end{equation}
where $b \in B$ is the index of observed flow and $B$ is the set of indices of observed flow. The unit of $y_{b}$ is the same as $x_{ai}^{h_2}$. For example, if there are two loop detectors recording the traffic counts in each interval, then we have $|B| = 2|H|$. By formulation~\ref{eq:fo}, the observed flow can be the traffic count observations in various forms, including traffic counts of a single-class vehicles, aggregated traffic counts of all vehicles, aggregated traffic counts of a road for both directions, and aggregated traffic counts of multiple links. The link/observation incidence $L_{ai}^{bh_2}$ represents how the observed flow is aggregated. $L_{ai}^{bh_2}$ is $1$ if link flow $x_{ai}^{h_2}$ is observed in the observed flow $y_b$ and $0$ otherwise, and a similar definition can be found in \citet{yang2018stochastic}.

Similarly, we may also observe the travel time across multiple links, vehicle classes and time intervals. For example, we may observe the total travel time of a highway which consists of multiple consecutive links, or we may observe the average travel time of car and trucks in a single link. To accommodate different kinds of travel time observations, we assume the observed travel time can be represented by a linear combination of all link travel time, as presented in Equation~\ref{eq:tl}.

\begin{eqnarray}
\label{eq:tl}
z_e = \sum_{i \in D} \sum_{a \in A} \sum_{h_2 \in H} M_{ai}^{eh_2} t_{ai}^{h_2}
\end{eqnarray}
where $M_{ai}^{eh_2}$ represents the weight of travel time for $t_{ai}^{h_2}$ in the observed travel time $e$. The unit of $z_e$ is the same as $t_{ai}^{h_2}$. $e\in E$ is the index of observed travel time and $E$ is the set of indices of observed travel time. The formulation~\ref{eq:tl} is also general enough to accommodate various types of travel time observations. For example, the observed travel time includes link travel time of a single-class vehicles, path travel time of single-class vehicles, average travel time of all vehicle classes. We note that $M_{ai}^{eh_2} \in [0,1]$  since we may observe the average travel time of multiple links, while $L_{ai}^{bh_2} \in \{0, 1\}$ because traffic flow is usually observed in an aggregated manner.

\begin{example}
	\label{ex:1}
	To illustrate the formulation of the observed flow and observed travel time, we consider a two-link network presented in Figure~\ref{fig:n2}. We only consider one time interval, hence $|H| = 1$. We assume vehicle class $1$ represents cars, and vehicle class $2$ represents trucks.
	\begin{figure}[h]
		\centering
		\includegraphics[width=0.55\linewidth]{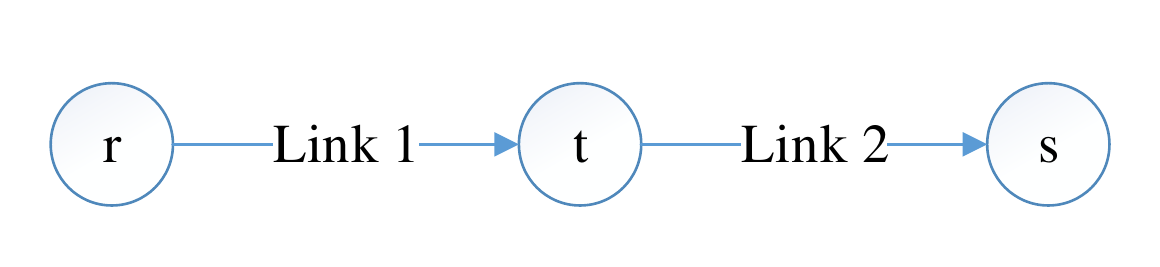}
		\caption{A two-link network}
		\label{fig:n2}
	\end{figure}
	
	Suppose we observe $50$ cars passing link $1$ and $150$ vehicles (trucks and cars) passing link $2$, then we have
	
	\begin{eqnarray}
	y_1 &=& 50\nonumber\\
	y_2&=& 150\nonumber\\
	L_{a=1, i=1}^{b=1, h_2 = 1} &=& 1\nonumber\\
	L_{a=1, i=2}^{b=1, h_2 = 1} &=& 0\nonumber\\
	L_{a=2, i=1}^{b=2, h_2 = 1} &=& 1\nonumber\\
	L_{a=2, i=2}^{b=2, h_2 = 1} &=& 1\nonumber
	\end{eqnarray}
	
	Therefore
	\begin{eqnarray}
	y_1 &=& L_{a=1, i=1}^{b=1, h_2 = 1} x_{a=1, i=1}^{h_2 =1} +L_{a=1, i=2}^{b=1, h_2 = 1} x_{a=1, i=2}^{h_2 =1}\nonumber\\
	&=& x_{a=1, i=1}^{h_2 =1}\nonumber\\
	y_2 &=& L_{a=2, i=1}^{b=2, h_2 = 1} x_{a=2, i=1}^{h_2 =1} + L_{a=2, i=2}^{b=2, h_2 = 1} x_{a=2, i=2}^{h_2 =1}\nonumber\\
	&=&  x_{a=2, i=1}^{h_2 =1} + x_{a=2, i=2}^{h_2 =1}\nonumber
	\end{eqnarray}
	
	Similarly, if we observe the travel time for traversing link $1$ and link $2$ is $100$ for cars, and the average travel time of link $2$ is $70$ for cars and trucks.
	\begin{eqnarray}
	z_1 &=& 100\nonumber\\
	z_2 &=& 70\nonumber\\
	M_{a=1, i= 1}^{e = 1, h_2 = 1} &=& 1\nonumber\\
	M_{a=2, i= 1}^{e = 1, h_2 = 1} &=& 1\nonumber\\
	M_{a=2, i= 1}^{e = 2, h_2 = 1} &=& 0.5 \nonumber\\
	M_{a=2, i= 2}^{e = 2, h_2 = 1} &=& 0.5\nonumber
	\end{eqnarray}
	
	Then we have
	\begin{eqnarray}
	z_1 &=& M_{a=1, i= 1}^{e = 1, h_2 = 1} t_{a =1, i =1}^{h_2 = 1} + M_{a=2, i= 1}^{e = 1, h_2 = 1} t_{a =2, i =1}^{h_2 = 1}\nonumber\\
	&=&  t_{a =1, i =1}^{h_2 = 1} + t_{a =2, i =1}^{h_2 = 1}\nonumber\\
	z_2 &=& M_{a=2, i= 1}^{e = 2, h_2 = 1}t_{a =2, i =1}^{h_2 = 1} +M_{a=2, i= 2}^{e = 2, h_2 = 1}t_{a =2, i =2}^{h_2 = 1} \nonumber\\
	&=& \frac{t_{a =2, i =1}^{h_2 = 1} + t_{a =2, i =2}^{h_2 = 1}}{2}\nonumber
	\end{eqnarray}

	We note the ``average travel time for cars and trucks''  means the ``average'' of car travel time and truck travel time, respectively. It is also possible to consider the weighted average of travel time by car and truck flow together, which will be discussed in section~\ref{sec:comp}.
	
\end{example}

\subsection{MCDODE formulation}

The formulation for multi-class dynamic origin-destination estimation (MCDODE) is presented in Equation~\ref{eq:dode}.

\begin{equation}
\label{eq:dode}
\begin{array}{rrcllll}
\vspace{5pt}
\displaystyle \min_{\{q_{rsi}^{h_1}\}_{i,r,s,h_1}} & \multicolumn{4}{l}{\displaystyle w_1 \sum_{b \in B}  \left( y_b' - \sum_{i \in D} \sum_{a \in A} \sum_{h_2 \in H}  L_{ai}^{bh_2}  \left(\sum_{rs \in K_q} \sum_{k \in K_{rs}} \sum_{h_1 \in H}   \rho_{rsi}^{ka}(h_1, h_2) p_{rsi}^{kh_1} q_{rsi}^{h_1}\right) \right)^2 } \\
~ & \multicolumn{4}{l}{\displaystyle + w_2\sum_{e \in E} \left(z_e' - \sum_{i \in D}\sum_{a \in A} \sum_{h_2 \in H} M_{ai}^{eh_2} t_{ai}^{h_2} \right)^2} \\
\textrm{s.t.} & \left(\{t_{ai}^{h_2}, c_{rsi}^{kh_1}, \mathbf{\rho}_{rsi}^{ka}(h_1, h_2) \} \right)_{r,s,i,k,a,h_1, h_2} &=&  \Lambda(\{f_{rsi}^{kh_1}\}_{i,r,s,k,h_1}) & \\
~ & f_{rsi}^{kh_1} &=&  p_{rsi}^{kh_1} q_{rsi}^{h_1} , \forall rs \in K_q, k\in K_{rs}, i\in D, h_1 \in H&\\
~ & p_{rsi}^{kh_1} &=& \Psi_{rsi}^k\left( \tilde{\vec{c}}_{rsi}^{k h_1}, \tilde{\vec{t}}_{ai}^{h_1} \right), \forall rs \in K_q, k\in K_{rs}, i\in D, h_1 \in H& \\
~ & q_{rsi}^{h_1} & \geq & 0 , \forall rs \in K_q, h_1 \in H, i\in D&
\end{array}
\end{equation}
where $y_b'$ and $z_e'$ are the observed flow and travel time data collected from real-world. The link/path travel time can be computed through tracking cumulative curves in the DNL model \citep{lu2013dynamic}, and the computation for DAR will be discussed in section~\ref{sec:dar}. The parameters $w_1$ and $w_2$ are the weight for each objective, respectively. 

Equation~\ref{eq:dode} is a bi-level optimization problem with upper level minimizing the $\ell^2$ norm between observed and estimated flow and travel time, and the lower level solves the traffic assignment problem denoted by function $\Psi_{rsi}^k(\cdot)$ and $\Lambda(\cdot)$. When $\Psi_{rsi}^k(\cdot, \cdot)$ represents the Dynamic User Equilibrium (DUE) conditions, Equation~\ref{eq:dode} is a Mathematical Program with Equilibrium Constraints (MPEC) problem. In contrast, when $\Psi_{rsi}^k(\cdot, \cdot)$ represents the Logit-model, Equation~\ref{eq:dode} can be formulated as either a bi-level optimization problem or a single-level non-linear optimization problem \citep{davis1994exact, ma2018generalized}.

Additional data sources, such as the historical OD demand and survey data, can also be used in the demand estimation. The DODE methods with these data have been extensively studied \citep{zhang2008estimating, wu2018hierarchical}. This paper focuses on the computational graph approach with observed traffic flow and travel time data, while other data can be potentially incorporated into the proposed framework.

The solution algorithm for the bi-level problem has been extensively studied over the past few decades. Various heuristic methods have been developed to solve the bi-level problem efficiently. For example, heuristic iterative algorithms between lower and upper problem are widely used in traffic applications \citep{yang1995heuristic}. Sensitivity analysis of the traffic assignment problem has been adopted to evaluate the gradients of bi-level problem.  \citet{lu2013dynamic} estimated the gradient of path flow using the dynamic network loading models, and \citet{balakrishna2008time} approximated the gradient of OD demand through stochastic perturbations. \citet{osorio2019dynamic} approximate the gradient by linearizing the dynamic traffic assignment models. All the studies mentioned above aim at finding the gradients of the OD demand for the bi-level formulation.  In this paper, we propose to evaluate the gradient of OD demand analytically through the forward-backward algorithm on a computational graph, and details will be described in section~\ref{sec:comp}.

\subsection{Vectorizing the MCDODE formulation}
\label{sec:vec}

First of all, the variables involved in the MCDODE formulation are vectorized, and the vectorization is performed for each vehicle class separately. We set $N = |H|$ and denote the total number path as $\Pi = \sum_{rs} |K_{rs}|$, $K=|K_q|$. The vectorized variables are presented in Table~\ref{tab:vec}.

\begin{table}[h!]
	\begin{center}
		\caption{MCDODE framework variable vectorization}
		\label{tab:vec}
		\begin{tabular}{p{3cm}ccccp{3cm}}
			\hline
			Variable & Scalar & Vector& Dimension & Type & Description\\
			\hline\hline \rule{0pt}{3ex}
			OD flow & $q_{rsi}^h$ & $\vec{q}_i$ &$\mathbb{R}^{N|K|}$ & Dense & $q_{rsi}^h$ is place at entry $(h-1)|K| + k$\\
			\hline \rule{0pt}{3ex}
			Path flow & $f_{rsi}^{kh}$ &$\mathbf{f}_i$ & $\mathbb{R}^{N\Pi}$ & Dense & $f_{rsi}^{kh}$ is placed at entry $(h-1)\Pi + k$\\
			\hline \rule{0pt}{3ex}
			Link flow & $x_{ai}^h$ & $\mathbf{x}_i$ &$\mathbb{R}^{N|A|}$ & Dense & $x_{ai}^h$ is placed at entry $(N-1)|A| + k$\\
			\hline \rule{0pt}{3ex}
			Link travel time & $t_{ai}^h$ & $\mathbf{t}_i$ &$\mathbb{R}^{N|A|}$ & Dense & $t_{ai}^h$ is placed at entry $(N-1)|A| + k$\\
			\hline \rule{0pt}{3ex}
			Path travel time & $c_{rsi}^{kh}$ & $\mathbf{c}_i$ &$\mathbb{R}^{N\Pi}$ & Dense & $c_{rsi}^{kh}$ is placed at entry $(N-1)\Pi + k$\\
			\hline \rule{0pt}{3ex}
			Observed flow & $y_b$ & $\mathbf{y}$ &$\mathbb{R}^{|B|}$ & Dense & $y_b$ is placed at entry $b$\\
			\hline \rule{0pt}{3ex}
			Observed travel time & $z_e$ & $\mathbf{z}$ &$\mathbb{R}^{|E|}$ & Dense & $z_e$ is placed at entry $e$\\
			\hline \rule{0pt}{3ex}
			DAR matrix & $\rho_{rsi}^{ka}(h_1, h_2)$ &$\mathbf{\rho}_i$ & $\mathbb{R}^{N|A| \times N\Pi}$ & Sparse & $\rho_{rsi}^{ka}(h_1, h_2)$ is placed at entry $[(h_2-1)|A| + a, (h_1-1)\Pi + k]$\\
			\hline \rule{0pt}{3ex}
			Route choice matrix & $p_{rsi}^{kh}$ &$\mathbf{p}_i$ & $\mathbb{R}^{N\Pi \times N|K|}$ & Sparse & $p_{rsi}^{kh}$ is placed at entry $[(h-1)|\Pi| + k, (h-1)|K| + rs]$\\
			\hline\rule{0pt}{3ex}
			Observation/link incidence matrix & $L_{ai}^{bh}$ & $\vec{L}_i$ & $\mathbb{R}^{|B| \times N||A|}$ & Sparse & $L_{ai}^{bh}$ is placed at entry $\left[b,  (h-1)|A| + a \right]$\\
			\hline\rule{0pt}{3ex}
			Link travel time portion matrix & $M_{ai}^{eh}$ & $\vec{M}_i$ & $\mathbb{R}^{|E| \times N||A|}$ & Sparse & $M_{ai}^{eh}$ is placed at entry $\left[e,  (h-1)|A| + a \right]$\\
			\hline
		\end{tabular}
	\end{center}
\end{table}

With the vectorized variables, Equations~\ref{eq:linkod}, \ref{eq:fo} and \ref{eq:tl} can be rewritten in Equation~\ref{eq:vec}.

\begin{equation}
\begin{array}{lllllll}
\vspace{5pt}
\label{eq:vec}
\vec{x}_i &=& \mathbf{\rho}_i \vec{p}_i \vec{q}_i, \forall i \in D\\
\vec{y} &=&\sum_{i \in D} \vec{L}_i \vec{x}_i\\
\vec{z} & =& \sum_{i \in D} M_i \vec{t}_i \end{array}
\end{equation}

Multiplications between sparse matrix and sparse matrix, sparse matrix and dense vector are very efficient, especially on multi-core CPUs and Graphics Processing Units (GPUs). Therefore, Equation~\ref{eq:vec} can be evaluated efficiently.
The MCDODE formulation in Equation~\ref{eq:dode} can be cast into the vectorized form presented in Equation~\ref{eq:dode2}.

\begin{equation}
\label{eq:dode2}
\begin{array}{rrcllll}
\vspace{5pt}
\displaystyle \min_{\{\vec{q}_i \}_{i}} & \multicolumn{4}{l}{\displaystyle  w_1\left(  \norm{ \vec{y}'- \sum_{i \in D} \vec{L}_i \vec{x}_i  }_2^2  \right)
	+ w_2 \left(\norm{\vec{z}' - \sum_{i \in D} \vec{M}_i \vec{t}_i }_2^2\right)  } \\
\textrm{s.t.} & \left \{\vec{t}_i, \vec{c}_i, \mathbf{\rho}_i \ \right\}_{i} &= & \Lambda(\{\vec{f}_i\}_i) & \\
~ & \vec{f}_i &=&  \vec{p}_i\vec{q}_i & \forall  i\in D&\\
~ & \vec{x}_i &=&  \mathbf{\rho}_i \vec{f}_i & \forall  i\in D&\\
~ & \vec{p}_i &= & \Psi_i \left( \{\vec{c}_i\}_i, \{\vec{t}_i \}_i  \right)& \forall i\in D \\
~ & \vec{q}_i & \geq & 0 & \forall  i\in D&
\end{array}
\end{equation}
where $\Psi_i \left( \{\vec{c}_i\}_i, \{\vec{t}_i \}_i  \right)$ is the vectorized route choice function for vehicle class $i$. We further substitute the path flow and link flow to the objective function, as presented in Equation~\ref{eq:dode3}.

\begin{equation}
\label{eq:dode3}
\begin{array}{rrcllll}
\vspace{5pt}
\displaystyle \min_{\{\vec{q}_i \}_{i}} & \multicolumn{4}{l}{\displaystyle  w_1\left(  \norm{ \vec{y}'- \sum_{i \in D} \vec{L}_i \mathbf{\rho}_i \vec{p}_i\vec{q}_i }_2^2  \right) + w_2\left( \norm{\vec{z}' - \sum_{i \in D} \vec{M}_i \vec{t}_i }_2^2 \right)} \\
\textrm{s.t.} & \left \{\vec{t}_i, \vec{c}_i, \mathbf{\rho}_i \ \right\}_{i} &= & \Lambda( \{ \vec{f}_i \}_i) & \\
~ & \vec{p}_i &= &\Psi_i \left( \{\vec{c}_i\}_i, \{\vec{t}_i \}_i  \right)& \forall i\in D \\
~ & \vec{q}_i & \geq & 0 & \forall  i\in D&
\end{array}
\end{equation}

\subsection{A computational graph for MCDODE}
\label{sec:comp}

In order to solve the MCDODE problem, our goal is to obtain the gradient of OD demand for formulation~\ref{eq:dode3}. In this section, we propose a novel approach to obtain the gradient of OD demand through the forward-backward algorithm on a computational graph. First we cast equation~\ref{eq:dode3} into a computational graph representation, and Figure~\ref{fig:fb} describes the structure of the computational graph for MCDODE. The forward-backward algorithm runs on the computational graph, and algorithm consists of two processes: the forward iteration and the backward iteration.

In general, the forward iteration assumes that OD demand is fixed and it solves for the network conditions, while the backward iteration assumes the network conditions are fixed and it updates the OD demand. The whole process of forward-backward algorithm resembles some heuristic methods that solve the upper level and lower level problem iteratively \citep{yang1995heuristic}. However, the proposed algorithm can evaluate the instantaneous gradient of OD demand analytically in the backward iteration.

{\em Forward iteration:} Forward iteration takes multi-class OD demand as the input and solves the traffic assignment problem presented in Equation~\ref{eq:dta}.
\begin{equation}
\label{eq:dta}
\begin{array}{llllll}
\vspace{5pt}
\vec{f}_i &=& \vec{p}_i\vec{q}_i&\forall i \in D\\
\left \{ \vec{t}_i, \vec{c}_i, \mathbf{\rho}_i \ \right\}_{i} &= & \Lambda(\{ \vec{f}_i \}_i) &\\
\vec{p}_i &= &\Psi_i \left( \{\vec{c}_i\}_i, \{\vec{t}_i \}_i  \right)& \forall i\in D \\
\end{array}
\end{equation}

The forward iteration includes the multi-class dynamic network loading models $\Lambda$ and route choice models $\{ \Psi_i \}_i$. The output of the forward iteration is route choice portion $\vec{p}_i$, DAR matrix $\mathbf{\rho}_i$ and link/path travel time $(\vec{t}_i, \vec{c}_i)$. We omit the solution method for traffic assignment problem, as it has been extensively studied in many literature \citep{peeta2001foundations}.  After solving the dynamic traffic assignment models, the forward iteration compute the objective function (loss) in formulation~\ref{eq:dode3}, which can be represented by a series of equations in Equation~\ref{eq:loss} and \ref{eq:forward}.
\begin{equation}
\label{eq:loss}
\L = w_1 \L_1 + w_2 \L_2
\end{equation}

\begin{equation}
\begin{array}{lllllll}
\label{eq:forward}
\L_1 &=& \norm{\vec{y}'- \vec{y}}_2^2 && \L_2 &=& \norm{\vec{z}'- \vec{z}}_2^2\\
\vec{y} &=& \sum_{i \in D} \vec{L}_i \vec{x}_i&\quad&   \vec{z} &=&\sum_{i \in D} \vec{M}_i \vec{t}_i \\
\vec{x}_i &=& \mathbf{\rho}_i  \vec{f}_i  &\quad& \{\vec{t}_i\}_i &=& \bar{\Lambda}(\{\vec{x}_i\}_i)\\
\vec{f}_i &=& \vec{p}_i \vec{q}_i &\quad&\vec{x}_i &=& \mathbf{\rho}_i  \vec{f}_i\\
&~& &\quad& \vec{f}_i &= &\vec{p}_i \vec{q}_i
\end{array}
\end{equation}
where $\vec{y}'$ is the observed flow and $\vec{y}$ is the reproduction of the observed flow estimated from the traffic assignment model. Similarly, $\vec{z}'$ is the observed travel time and $\vec{z}$ is the reproduction of the observed travel time estimated from the traffic assignment model. We use $\bar{\Lambda}$ to represent a part of the function $\Lambda$, and $\bar{\Lambda}$ takes dynamic link flow as input and outputs the link travel time $\{\vec{t}_i\}_i$. Precisely, $\bar{\Lambda}$ represents the dynamic link models \citep{zhang2013modelling, jin2012link}.

{\em Backward iteration:} The backward iteration searches for the gradient of OD demand for formulation~\ref{eq:back} with the route choice portion $\vec{p}_i$, DAR matrix $\mathbf{\rho}_i$ and travel time $\{\vec{c}_i, \vec{t}_i\}_i$ known from the forward iteration.
\begin{equation}
\label{eq:back}
\begin{array}{rrcllll}
\vspace{5pt}
\displaystyle \min_{\{\vec{q}_i \}_{i}} & \multicolumn{4}{l}{\displaystyle w_1 \left(  \norm{ \vec{y}' - \sum_{i \in D} \vec{L}_i \mathbf{\rho}_i \vec{p}_i\vec{q}_i }_2^2  \right) + w_2\left( \norm{\vec{z}' - \sum_{i \in D} \vec{M}_i \vec{t}_i }_2^2\right)} \\
~ & \vec{q}_i & \geq & 0 & \forall  i\in D&
\end{array}
\end{equation}


When the gradient of OD demand is known, a projected gradient descent method can be used to solve Equation~\ref{eq:back}. The reason we call the solution process for Equation~\ref{eq:back} a ``backward iteration'' is that, the gradient of OD demand for Equation~\ref{eq:back} can be evaluated through the backpropagation (BP) method. Taking the derivative of the objective function step by step, we have a series of equations presented in Equation~\ref{eq:backward}

\begin{equation}
\begin{array}{llllllll}
\label{eq:backward}
\frac{\partial \L}{\partial \L_1} &=& w_1 &\quad& \frac{\partial \L}{\partial \L_2} &=& w_2\\\\
\frac{\partial \L_1}{\partial \vec{y}} &=& 2 \left( \vec{y}' - \sum_{i' \in D} \vec{L}_{i'} \mathbf{\rho}_{i'} \vec{p}_{i'}\vec{q}_{i'} \right) & \quad &\frac{\partial \L_2}{\partial \vec{z}} &=&2\left( \vec{z}' - \sum_{i' \in D} \vec{M}_{i'} \vec{t}_{i'}  \right)\\\\
\frac{\partial \L_1}{\partial \vec{x}_i} &=& -\vec{L}_{i}^T \frac{\partial \L_1}{\partial \vec{y}}  &\quad& \frac{\partial \L_2}{\partial \vec{t}_i} &=& -\vec{M}_i^T \frac{\partial \L_2}{\partial \vec{z}} \\\\
\frac{\partial \L_1}{\partial \vec{f}_i} &=& \mathbf{\rho}_{i}^T \frac{\partial \L_1}{\partial \vec{x}_i} &\quad& \frac{\partial \L_2}{\partial \vec{x}_i} &=& \frac{\partial\bar{\Lambda}(\{\vec{x}_i\}_i)}{\partial \vec{x}_i}   \frac{\partial \L_2}{\partial \vec{t}_i}\\\\
\frac{\partial \L_1}{\partial \vec{q}_i} &=& \vec{p}_{i}^T \frac{\partial \L_1}{\partial \vec{f}_i} &\quad & \frac{\partial \L_2}{\partial \vec{f}_i} &=& \mathbf{\rho}_{i}^T  \frac{\partial \L_2}{\partial \vec{x}_i}\\\\
&& & \quad &\frac{\partial \L_2}{\partial \vec{q}_i} &=& \vec{p}_{i}^T\frac{\partial \L_2}{\partial \vec{f}_i}
\end{array}
\end{equation}
Combining the equations in \ref{eq:backward}, the gradient of OD demand is presented in Equation~\ref{eq:grad}.
\begin{equation}
\begin{array}{llllllll}
\label{eq:grad}
\frac{\partial \L}{\partial \vec{q}_i} &=& \frac{\partial \L}{\partial \L_1} \frac{\partial \L_1}{\partial \vec{q}_i} + \frac{\partial \L}{\partial \L_2}\frac{\partial \L_2}{\partial \vec{q}_i}\\
&=&- 2 w_1 \vec{p}_{i}^T  \mathbf{\rho}_{i}^T \vec{L}_{i}^T \left( \vec{y}'- \sum_{i' \in D} \vec{L}_{i'} \mathbf{\rho}_{i'} \vec{p}_{i'}\vec{q}_{i'} \right) - 2w_2 \vec{p}_{i}^T  \mathbf{\rho}_{i}^T\frac{\partial\bar{\Lambda}(\{\vec{x}_i\}_i)}{\partial \vec{x}_i} \vec{M}_i^T  \left( \vec{z}' - \sum_{i' \in D} \vec{M}_{i'} \vec{t}_{'}  \right)
\end{array}
\end{equation}


The forward-backward algorithm is illustrated in Figure~\ref{fig:fb}. The solid arrow represents the forward iteration and the dashed arrow represents the backward iteration. We omit the forward and backward iterations for historical OD data, while the other processes are described above. The forward and backward iterations for historical OD data are straightforward by adding $w_3\L_3 = w_3 \sum_{i\in D}\norm{\vec{q}'_i - \vec{q}_i}^2_2$ to the MCDODE formulation~\ref{eq:dode3} \citep{zhang2008estimating}.

\begin{figure}[h]
	\centering
	\includegraphics[width=0.85\linewidth]{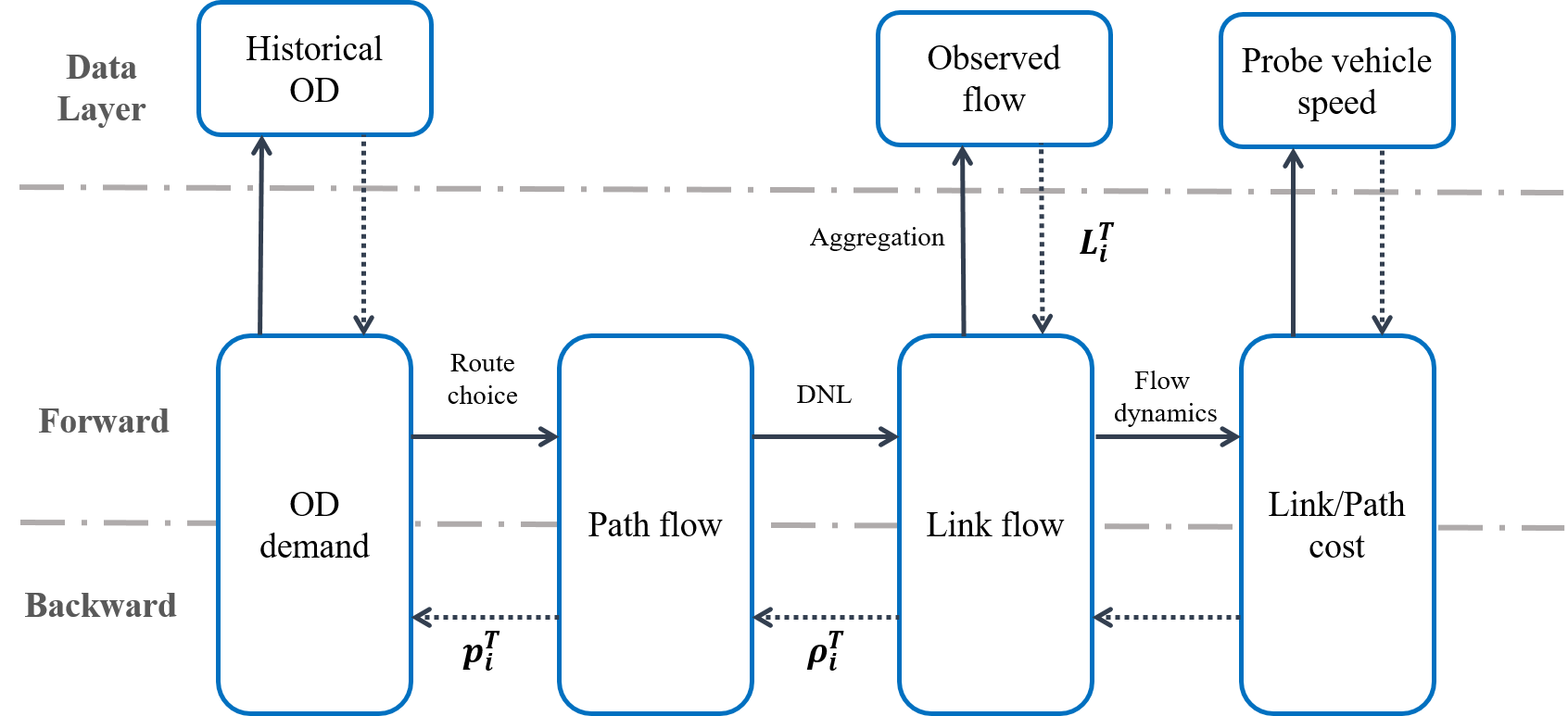}
	\caption{An illustration of the forward-backward algorithm.}
	\label{fig:fb}
\end{figure}

To solve the MCDODE formulation in Equation~\ref{eq:dode2}, we run the forward-backward algorithm to compute the gradient of OD demand. Then the projected gradient descent method is used to update the OD demand. There is a family of gradient-based methods that can be used, and we will discuss them in section~\ref{sec:sgd}. Comparisons among different methods will also be conducted in the numerical experiments.

Our computational graph approach shares many similarities with the deep learning models: 1) both models contain high dimensional parameters; 2) multi-core CPU and GPU can be used to speed up the solution process; 3) many advanced variants of gradient-based method can be used to solve the models; 4) Backpropagation method can be used to evaluate the gradient layer by layer \citep{rumelhart1985learning, wu2018hierarchical}. Potentially, all the techniques used in the training for deep learning can be used for the proposed computational graph. In this paper, we will test the advanced gradient-based method and multi-processing, while leaving other techniques, such as dropout \citep{gal2016dropout}, transfer learning \citep{pan2010survey} and regularization \citep{neyshabur2014search} for future research.

There are many flexible ways to incorporate observed data in the backward-forward algorithm. For example, it is possible to view the vehicle trajectories as samples from path flow and incorporate the trajectory data into the computational graph.
We can also compute the weighted average speed for cars and trucks to reproduce the ``average link travel time'' since the network conditions (flow, travel time) are fixed in the backward iterations. Hence the question left in Example~\ref{ex:1} is answered.

As for the stopping criteria, we first claim Proposition~\ref{prop:stop} holds by the definition of the forward-backward algorithm.
\begin{proposition}
	\label{prop:stop}
	In the proposed forward-backward algorithm, the DAR matrix, route choice portions and link/path travel time do not change if and only if $\frac{\partial \L}{\partial \vec{q}_i} = 0, \forall i$ during the forward and backward iterations.
\end{proposition}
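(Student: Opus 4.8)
The plan is to read the statement as an assertion about a fixed point of the coupled iteration, and to make the two half-steps explicit as maps on the OD demand. The forward iteration is a deterministic map $G:\vec{q}_i\mapsto(\mathbf{\rho}_i,\vec{p}_i,\vec{t}_i,\vec{c}_i)$ that solves the traffic assignment of Equation~\ref{eq:dta}; here I use the standing assumption that $\Lambda$ and the route-choice maps $\Psi_i$ return a unique network state for a given demand, so that $G$ is genuinely single valued. The backward iteration replaces $\vec{q}_i$ by $\vec{q}_i^{+}=P\!\left(\vec{q}_i-\gamma\,\tfrac{\partial\L}{\partial\vec{q}_i}\right)$ for a step size $\gamma>0$, where the gradient is the one in Equation~\ref{eq:grad} evaluated with $(\mathbf{\rho}_i,\vec{p}_i,\vec{t}_i,\vec{c}_i)$ frozen at the forward-iteration values and $P$ is Euclidean projection onto $\{\vec{q}_i\geq 0\}$. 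Because $G$ is a function, ``the DAR matrix, route choice portions and link/path travel time do not change'' across consecutive iterations reduces to ``$\vec{q}_i^{+}=\vec{q}_i$ for all $i$'' (the forward implication is immediate; the converse uses local injectivity of $G$, discussed at the end).

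For the \emph{if} direction, suppose $\tfrac{\partial\L}{\partial\vec{q}_i}=0$ for all $i$. Then $\vec{q}_i^{+}=P(\vec{q}_i)$, and since every iterate is feasible ($\vec{q}_i\geq 0$, preserved by the projection at each backward step and by a feasible initialization) we get $P(\vec{q}_i)=\vec{q}_i$. Hence $\vec{q}_i$ is unchanged for every $i$, and feeding the same demand into the next forward iteration returns the same $(\mathbf{\rho}_i,\vec{p}_i,\vec{t}_i,\vec{c}_i)=G(\vec{q}_i)$; so these quantities do not change.

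For the \emph{only if} direction, if those quantities do not change then, by the reduction above, $\vec{q}_i=P\!\left(\vec{q}_i-\gamma\,\tfrac{\partial\L}{\partial\vec{q}_i}\right)$ for every $i$; when $\vec{q}_i$ is interior (strictly positive demand) and $\gamma$ is small enough that $\vec{q}_i-\gamma\,\tfrac{\partial\L}{\partial\vec{q}_i}$ stays nonnegative, $P$ is the identity near $\vec{q}_i$ and the fixed-point equation forces $\tfrac{\partial\L}{\partial\vec{q}_i}=0$. I expect the boundary of the nonnegativity constraint to be the only real obstacle: if some entry of $\vec{q}_i$ vanishes while the matching gradient entry is positive, the projection clips the step and the iterate is stationary even though the full gradient is not zero, so there the honest stationarity certificate is the \emph{projected} gradient and the statement should be read in that sense (it holds verbatim once the estimated demand is interior). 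The remaining point, which I would take as part of the modeling setup rather than prove, is that $G$ is single valued and locally injective, so that a frozen network state indeed pins down a frozen demand --- otherwise one only gets $G(\vec{q}_i^{+})=G(\vec{q}_i)$ rather than $\vec{q}_i^{+}=\vec{q}_i$ in the reduction.
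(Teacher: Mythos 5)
Your proposal is correct in substance, but it is worth knowing that the paper does not actually prove this proposition at all: it is asserted to hold ``by the definition of the forward-backward algorithm,'' and the only written proof in the paper is for the subsequent local-minimum claim, which simply invokes this one. So your argument supplies what the paper leaves implicit, and it does so along the natural route: treating the forward pass as a single-valued map $G$ from demand to network state, the backward pass as a projected gradient step, and reducing ``the DAR matrix, route choice portions and travel times do not change'' to ``the demand is a fixed point of the projected step.'' The two caveats you flag are genuine and not artifacts of your method. First, at the boundary of $\{\vec{q}_i \geq 0\}$ the projected step can be stationary while $\frac{\partial \L}{\partial \vec{q}_i} \neq 0$ (an active nonnegativity constraint with a positive gradient component), so the ``only if'' direction as literally stated holds only for interior iterates, or after replacing the gradient by the projected gradient / first-order optimality condition; this is a limitation of the proposition as written, not of your proof, and it also quietly affects the paper's follow-up claim that convergence yields $\frac{\partial \L}{\partial \vec{q}_i}=0$. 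Second, passing from ``frozen network state'' to ``frozen demand'' does require $G$ to be single valued and (locally) injective --- with a simulation-based, stochastic $\Lambda$ neither is automatic, and without it one only gets $G(\vec{q}_i^{+})=G(\vec{q}_i)$, exactly as you say. In short: your write-up is more careful than the source; if you wanted to match the paper's level of rigor you could state the interiority (or projected-gradient) qualification and the single-valuedness of $\Lambda,\Psi_i$ as standing assumptions and the rest goes through as you wrote it.
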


Proposition~\ref{prop:stop} indicates that the forward-backward algorithm converges when either of the following two conditions hold: 1) in the forward iteration, the DAR matrix, route choice portions and link/path travel time do not change; 2) in the backward iteration, $\frac{\partial \L}{\partial \vec{q}_i} = 0$. During the forward-backward algorithm, we can either monitor the change of DAR matrix, route choice portions and link/path or the change of the gradient of OD demand. We can also see that the forward-backward algorithm converges to the local minimum for formulation~\ref{eq:dode3} in Proposition~\ref{prop:correct}.

\begin{proposition}
	\label{prop:correct}
	When the forward-backward algorithm converges, the estimated OD demand $\{\vec{q}_i\}_i$ is a local minimum for formulation~\ref{eq:dode3}.
\end{proposition}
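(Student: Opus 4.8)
The plan is to read a first–order optimality condition off the converged state of the algorithm, upgrade it to a \emph{global} optimality statement for the frozen (instantaneous) objective by exploiting convexity, and then transfer that statement to the true bi–level objective of formulation~\ref{eq:dode3} using the fact that, at convergence, the dynamic network loading map has stopped changing. First I would fix notation: let $F(\{\vec{q}_i\}_i)$ denote the objective of formulation~\ref{eq:dode3}, i.e.\ the loss $\L$ with $\vec{p}_i,\mathbf{\rho}_i,\vec{t}_i,\vec{c}_i$ taken to be the traffic–assignment solution that $\Lambda$ and $\{\Psi_i\}_i$ return for the demand $\{\vec{q}_i\}_i$, and let $\widetilde F(\{\vec{q}_i\}_i)$ denote the objective of the backward problem~\ref{eq:back}, in which $\vec{p}_i$, $\mathbf{\rho}_i$ and the Jacobian $\partial\bar{\Lambda}/\partial\vec{x}_i$ are held fixed at the values returned by the last forward iteration. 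Let $\{\vec{q}_i^*\}_i$ be the converged demand and $\theta^*$ the corresponding network conditions $(\vec{p}_i^*,\mathbf{\rho}_i^*,\vec{t}_i^*,\dots)$.

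Step 1 (optimality for the frozen problem). By construction the backward iteration runs projected gradient descent on $\widetilde F$ over $\{\vec{q}_i\ge 0\}$ using the gradient in Equation~\ref{eq:grad}, so convergence of that inner loop means $\{\vec{q}_i^*\}_i$ satisfies the KKT conditions of formulation~\ref{eq:back} --- $\partial\widetilde F/\partial\vec{q}_i\ge 0$ componentwise with complementary slackness against $\vec{q}_i^*$, Proposition~\ref{prop:stop} being the sub–case where $\partial\L/\partial\vec{q}_i=0$ outright. The point is that $\widetilde F$ is a \emph{convex quadratic} in $\{\vec{q}_i\}_i$: the flow term has the form $\norm{\vec{y}'-A\vec{q}}_2^2$ for a fixed matrix $A$ assembled from $\vec{L}_i,\mathbf{\rho}_i^*,\vec{p}_i^*$, and the travel–time term, once $\bar{\Lambda}$ is replaced by its first–order expansion about $\vec{x}^*$, is again a squared affine function of $\{\vec{q}_i\}_i$; both Hessians are positive semidefinite. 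Hence any KKT point of formulation~\ref{eq:back} is in fact a global minimizer of $\widetilde F$ on the nonnegative orthant, so $\widetilde F(\{\vec{q}_i\}_i)\ge\widetilde F(\{\vec{q}_i^*\}_i)$ for every feasible $\{\vec{q}_i\}_i$.

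Step 2 (transfer to $F$). It then remains to show $F=\widetilde F$ on a neighborhood $\mathcal{N}$ of $\{\vec{q}_i^*\}_i$. In the simulation–based setting the route–choice portions $\vec{p}_i$ and the DAR entries $\mathbf{\rho}_i$ are piecewise constant in the demand and the link–model Jacobian $\partial\bar{\Lambda}/\partial\vec{x}_i$ is locally constant, so convergence of the algorithm --- in the sense of Proposition~\ref{prop:stop}, that the DAR matrix, route–choice portions and link/path travel times have ceased to change --- places $\{\vec{q}_i^*\}_i$ in a region $\mathcal{N}$ on which the assignment map $\theta(\cdot)\equiv\theta^*$. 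On $\mathcal{N}$ we then have $\vec{x}_i=\mathbf{\rho}_i^*\vec{p}_i^*\vec{q}_i$ (a fixed linear map of $\vec{q}_i$) and $\vec{t}_i=\bar{\Lambda}(\vec{x})$ with $\bar{\Lambda}$ affine on the relevant range, so the first–order expansion of $\bar{\Lambda}$ used inside $\widetilde F$ is exact there; hence $F\equiv\widetilde F$ on $\mathcal{N}$. Chaining this with Step 1 gives, for every feasible $\{\vec{q}_i\}_i\in\mathcal{N}$, $F(\{\vec{q}_i\}_i)=\widetilde F(\{\vec{q}_i\}_i)\ge\widetilde F(\{\vec{q}_i^*\}_i)=F(\{\vec{q}_i^*\}_i)$, which is exactly the assertion that $\{\vec{q}_i^*\}_i$ is a local minimum of formulation~\ref{eq:dode3}.

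I expect Step 2 to be the main obstacle: it is where all the modeling content lives, namely pinning down the precise sense in which the dynamic network loading map is locally stationary at the converged demand, and in which the link–performance map $\bar{\Lambda}$ coincides with the linearization already baked into Equation~\ref{eq:grad}. For an analytic DTA this would demand a genuine local–Lipschitz / differentiability argument together with an envelope–type identity; for the simulation–based models targeted here it reduces to the piecewise constancy of the discretized DAR and route–share outputs --- precisely the quantity Proposition~\ref{prop:stop} monitors as the stopping test --- plus a mild local–linearity hypothesis on $\bar{\Lambda}$ to which the derivation of Equation~\ref{eq:grad} is already committed. The remaining ingredients (KKT from inner convergence, the convex–QP structure of the frozen problem, and the final chain of inequalities) are routine and I would present them only briefly.
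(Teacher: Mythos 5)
Your Step~1 is, in substance, the paper's entire proof: the paper argues that convergence gives $\frac{\partial \L}{\partial \vec{q}_i} = 0$ via Proposition~\ref{prop:stop}, observes that formulation~\ref{eq:back} is convex so that $\frac{\partial^2 \L}{\partial \vec{q}_i^2} \succeq 0$, and concludes local minimality. You sharpen this in two useful ways. First, you state the converged condition as a KKT condition for the constrained problem rather than unconstrained stationarity; this is the correct condition for projected gradient descent on $\{\vec{q}_i \ge 0\}$, since the iterates can stall at a boundary point where the gradient is nonzero but points outward, a case the paper's $\frac{\partial \L}{\partial \vec{q}_i}=0$ criterion silently excludes. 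Second, you note that for the frozen convex quadratic any KKT point is a \emph{global} minimizer of problem~\ref{eq:back}, which is stronger than what the paper extracts from convexity. One small caution: in~\ref{eq:back} as literally written the travel-time term is constant in $\vec{q}_i$ (the $\vec{t}_i$ are fixed), whereas Equation~\ref{eq:grad} carries the linearized $\frac{\partial\bar{\Lambda}}{\partial \vec{x}_i}$ contribution; your frozen objective $\widetilde F$ is the reading consistent with~\ref{eq:grad}, not with~\ref{eq:back}, and it is worth saying which one you mean.

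Where you genuinely depart from the paper is Step~2, and this is both the strength and the residual weakness of your proposal. The paper never bridges from the frozen problem (fixed $\vec{p}_i$, $\mathbf{\rho}_i$, $\vec{t}_i$) to formulation~\ref{eq:dode3}, in which these quantities depend on $\{\vec{q}_i\}_i$ through $\Lambda$ and $\Psi_i$; it simply asserts the conclusion for~\ref{eq:dode3} after checking optimality of~\ref{eq:back}. You correctly identify this transfer as the crux and propose to close it by showing $F \equiv \widetilde F$ on a neighborhood of the converged demand, invoking piecewise constancy of the simulated DAR and route-choice outputs and local affinity of $\bar{\Lambda}$. The diagnosis is right, but those hypotheses are established neither in the paper nor in your sketch, and they do not follow from Proposition~\ref{prop:stop}, which is a fixed-point statement about successive iterations \emph{at} the converged demand, not a statement about an open set of demands around it; for a discretized, stochastic simulator even piecewise constancy in a full neighborhood is a nontrivial (and arguably false, near switching surfaces) claim. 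So, judged as a complete argument, your Step~2 still has an unclosed assumption; judged against the paper, your proposal is strictly more careful, since it names and attempts to fill a gap that the paper's three-line proof does not acknowledge at all.
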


\begin{proof}
	When the forward-backward algorithm converges, we know $\frac{\partial \L}{\partial \vec{q}_i} = 0$ by Proposition~\ref{prop:stop}. Since formulation~\ref{eq:back} is convex, $\frac{\partial^2 \L}{\partial \vec{q}_i^2} \succeq 0$. Therefore, $\{\vec{q}_i\}_i$ is the local minimum solution.
\end{proof}


\section{Solution Algorithms}
\label{sec:solution}
In this section, we first discuss several practical issues to complete the MCDODE framework with the forward-backward algorithm. We develop a multi-class traffic simulation package called MAC-POSTS for the network loading function $\Lambda(\cdot)$, and a Growing Tree algorithm is proposed to obtain the DAR matrix. Secondly, we discuss how to evaluate the derivative of link travel time in dynamic networks. Thirdly, we present how to incorporate multi-day observation data in the proposed MCDODE framework with multiprocessing. Lastly, the whole framework for MCDODE is presented.

\subsection{Multi-class traffic simulation}
\label{sec:simu}
The dynamic network loading function $\Lambda(\cdot)$ is fulfilled with the mesoscopic multi-class traffic simulation package Mobility Data Analytics Center - Prediction, Optimization, and Simulation toolkit for Transportation Systems (MAC-POSTS) developed by the Mobility Data Analytics Center at Carnegie Mellon University. To simulate heterogeneous traffic with multi-class vehicles like cars and trucks, the simulation package captures flow dynamics and outputs the traffic metrics for multi-class vehicles. The flow metrics include the traffic volumes, traffic speed and travel time. Due to the page limitation,  more details about the mesoscopic multi-class traffic simulation model in MAC-POSTS is presented in \citep{qian2017modeling, CARTRUCK}.

With the multi-class dynamic OD demand known, the mesoscopic multi-class traffic simulation model performs the following steps orderly in every loading interval ({\em e.g.} five seconds):
\begin{enumerate}
	\item Vehicle generation: multi-class vehicles are generated at origins according to the traffic demand.
	\item Routing: the route choice behaviors of all vehicles are updated, according to the route choice models.
	\item Node evolution: cars and trucks are moved through intersections following the intersection flow model.
	\item Link evolution: cars and trucks are moved on links following the link flow model.
	\item Network flow statistics: the model records link flow counts, link speeds, travel time and other network performance statistics.
\end{enumerate}

We note the loading interval is different from the interval defined in this paper, since loading interval is usually much shorter. After running the simulation, the route choice portion $\vec{p}_i$ and link/route travel time $(\vec{t}_i, \vec{c}_i)$ for each vehicle class can be obtained based on the simulation results. The DAR matrix $\mathbf{\rho}_i$ for each vehicle class can also be obtained by constructing the tree-based cumulative curves during the simulation process, and details are presented in section \ref{sec:dar}.

\subsection{Tree-based cumulative curve}
\label{sec:dar}
In this section, we develop a novel method to compute the DAR matrix during the traffic simulation. Computing the DAR matrix during the simulation is more efficient than obtaining the DAR matrix after the simulation. However, computing the dynamic assignment ratio (DAR) during the traffic simulation is challenging. A naive method to obtain the DAR matrix is by recording the trajectories of all simulated vehicles in the DNL process. This method iterates across all paths and links over all time intervals, and in each iteration the method computes the number of vehicles arriving at a specific link from a specific path. Since the dimension of DAR matrix increases exponentially with respect to the size of network and the number of time intervals, the naive method is computational implausible for large-scale networks. In this section, we propose a novel method to compute the DAR matrix through the tree-based cumulative curves, and the proposed method is efficient in both computational time (time complexity) and memory (space complexity).

We define $\chi_{arsi}^{kh_1}(\cdot)$ as the tree-based cumulative curve of link $a$ for class-$i$ vehicles departing from path $k$ in OD pair $rs$ in time interval $h_1$.  $\chi_{arsi}^{kh_1}(t)$ takes the time $t$ as input and outputs the total number of vehicles departing in time interval $h_1$ from path $rsk$ and arriving at link $a$ before time $t$. Then the DAR can be computed by Equation~\ref{eq:dar2}.
\begin{equation}
\label{eq:dar2}
\rho_{rsi}^{ka}(h_1, h_2) = \frac{\chi_{arsi}^{kh_1}(\overline{t_2}) - \chi_{arsi}^{kh_1}(\underline{t_2})}{f_{rsi}^{kh_1}}
\end{equation}
where $\underline{t_2}$ is the beginning of time interval $h_2$ and $\overline{t_2}$ is the end of time interval $h_2$.

We note that $\chi_{arsi}^{kh_1}(\cdot)$ records the cumulative curves for each path flow and departing time interval separately, and hence it requires more memory and computational power than the standard link-based cumulative curve \citep{lu2013dynamic}. However, only a very small fraction of vehicles pass a specific link $a$ during the simulation. There are only a small fraction of paths containing a specific link, and hence the $\chi_{arsi}^{kh_1}(\cdot)$ is sparse in terms of path indices $rsk$ and time indices $h_1$. Using this intuition, we develop a Growing Tree algorithm to build the tree-based cumulative curve $\chi_{arsi}^{kh_1}(\cdot)$ for each link $a$. Since the algorithm is tree-based, so we refer $\chi_{arsi}^{kh_1}(\cdot)$ as the tree-based cumulative curves.

The process of the Growing Tree algorithm is presented in Algorithm~\ref{alg:gt}.

\begin{algorithm}[H]
	\SetKwInOut{Input}{Input}
	\SetKwInOut{Output}{Output}
	\underline{\texttt{GrowingTree}} $\left(S, n\right)$\;
	\Input{Traffic simulator $S$, number of loading intervals $n$}
	\Output{Tree-based cumulative curves $\chi$}
	Initialize an empty dictionary $\chi$\;
	\For{$(i=0;~i<n;~++i)$ }{
		Run the simulator $S$ for one loading interval\;
		\For{$a \in A$}{
			Initialize an empty dictionary $\chi[a]$\;
			Extract the set of vehicles going in link $a$\, and denote it as $Q$\;
			\For{$v \in Q$}{
				Suppose vehicle $v$ follows path $k$ in OD pair $rs$ and departs in time interval $h_1$ and the vehicle class is $i$\;
				\If{$i$ is not the key of dictionary $\chi[a]$}{
					Initialize $\chi[a][i]$ with an empty dictionary\;
				}		
				\If{$h_1$ is not the key of dictionary $\chi[a][i]$}{
					Initialize $\chi[a][i][h_1]$ with an empty dictionary\;
				}
				\If {$rsk$ is not the key of dictionary $\chi[a][i][h_1]$}{
					Initialize $\chi[a][i][h_1][rsk]$ with an empty cumulative curve\;
				}
				Add record $(i, 1)$ to the cumulative curve $\chi[a][i][h_1][rsk]$\;
			}
		}
	}
	
	\caption{Growing Tree algorithm for constructing $\chi_{arsi}^{kh_1}(\cdot)$}
	\label{alg:gt}
\end{algorithm}
In the algorithm, a record $(i, 1)$ means one vehicle arriving at the link $a$ at time $t$. $\chi_{arsi}^{kh_1}(\cdot)$ is constructed as a tree, as presented in Figure~\ref{fig:gt}. During the construction, when a vehicle transverses a link, a leaf containing a standard cumulative curve is either created or updated to record the location of that vehicle. The tree $\chi_{arsi}^{kh_1}(\cdot)$ is unbalanced, so a hashmap-based tree is more memory efficient. In Algorithm~\ref{alg:gt}, a dictionary refers to a key-value mapping implemented by hashmap, and readers can view the dictionary as one of the following data structures: \texttt{dictionary} in Python, \texttt{HashMap} in Java,  or \texttt{unordered\_map} in C++.

\begin{figure}[h]
	\centering
	\includegraphics[width=0.85\linewidth]{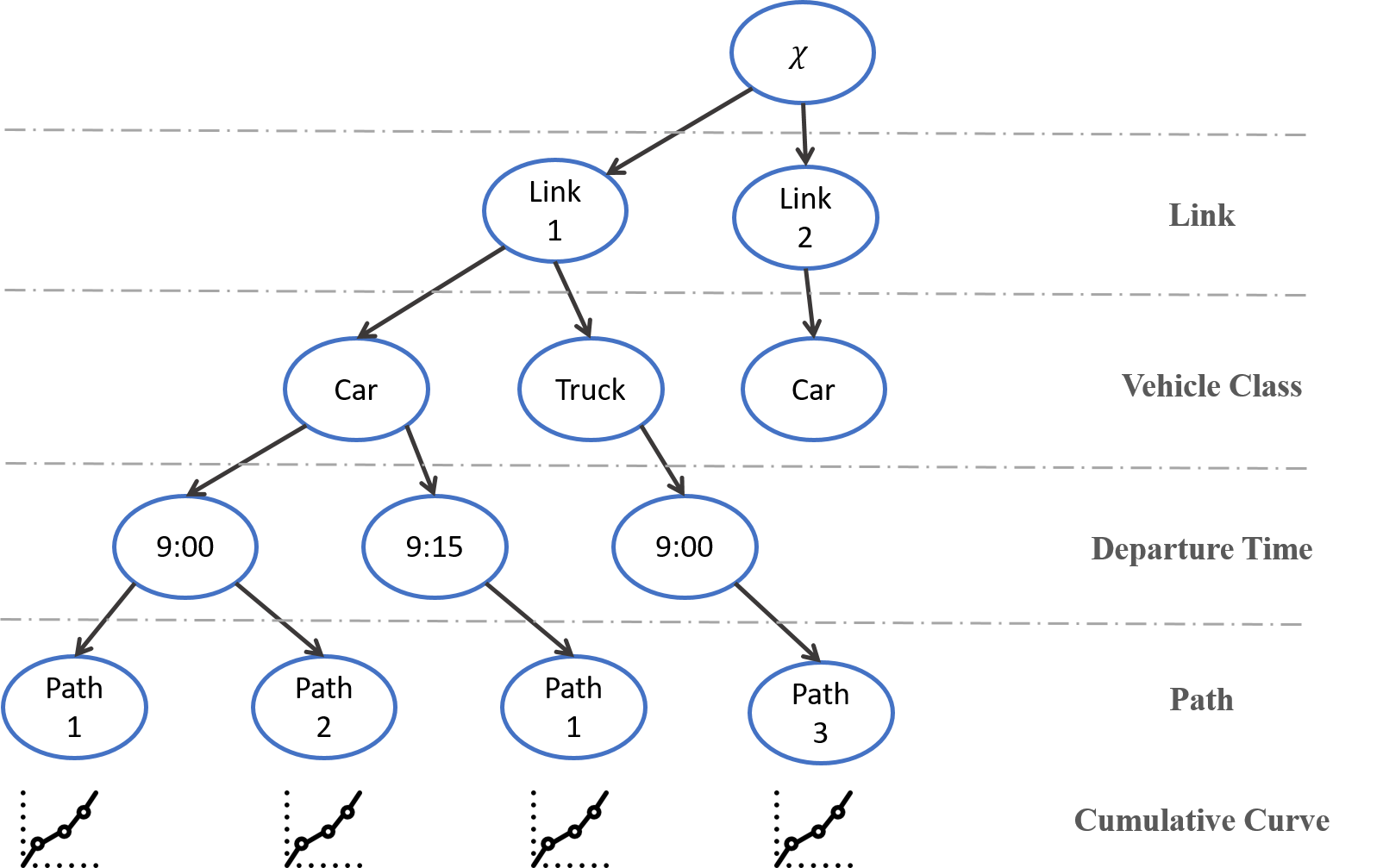}
	\caption{An illustration of the Growing Tree algorithm.}
	\label{fig:gt}
\end{figure}

\subsection{Derivatives of link travel time}
In the backward iteration described in section~\ref{sec:comp}, we need to compute the derivatives of link travel time $\frac{\partial\bar{\Lambda}(\{\vec{x}_i\}_i)}{\partial \vec{x}_i}$. The function $\bar{\Lambda}$ represents the link flow models which include, but is not limited to, point queue, spatial queue, cell transmission model, link transmission model and link queue model \citep{jin2012link, zhang2013modelling}. There is no closed-form for link flow models, hence the derivatives of function $\bar{\Lambda}$ can be challenging to evaluate analytically. In contrast, there exists methods to approximate the link travel time derivatives for the simulation-based models, and the basic idea of this kind of methods is to examine the extra link travel time induced by a marginal vehicle added to the link. Readers are refereed to \citet{qian2012system, lu2013dynamic} for the implementation details.
In this paper, we adopt the approximation approach discussed in \citet{lu2013dynamic} to evaluate $\frac{\partial\bar{\Lambda}(\{\vec{x}_i\}_i)}{\partial \vec{x}_i}$.

\subsection{Incorporating multi-day observations with multiprocessing}
\label{sec:sgd}
From formulation \ref{eq:dode3}, the multi-class demand is estimated with one data sample $(\vec{y}', \vec{z}')$. In real world applications, we may observe the flow and travel time on multiple days. Suppose we collect data samples for $M$ days and we let $(\vec{y}_m', \vec{z}_m')$ denote the observed flow and travel time on day $m$, then the MCDODE problem~\ref{eq:dode3} can be extended to accommodate multi-day observations in formulation~\ref{eq:dode4}

\begin{equation}
\label{eq:dode4}
\begin{array}{rrcllll}
\vspace{5pt}
\displaystyle \min_{\{\vec{q}_i \}_{i}} & \multicolumn{4}{l}{\displaystyle \frac{1}{M} \sum_{1 \leq m \leq M} \left[ w_1\left(  \norm{ \vec{y}_m'- \sum_{i \in D} \vec{L}_i \mathbf{\rho}_i \vec{p}_i\vec{q}_i }_2^2  \right) + w_2\left( \norm{\vec{z}_m' - \sum_{i \in D} \vec{M}_i \vec{t}_i }_2^2 \right)\right]} \\
\textrm{s.t.} & \left \{\vec{t}_i, \vec{c}_i, \mathbf{\rho}_i \ \right\}_{i} &= & \Lambda( \{ \vec{f}_i \}_i) & \\
~ & \vec{p}_i &= &\Psi_i \left( \{\vec{c}_i\}_i, \{\vec{t}_i \}_i  \right)& \forall i\in D \\
~ & \vec{q}_i & \geq & 0 & \forall  i\in D&
\end{array}
\end{equation}

Formulation~\ref{eq:dode4} can be directly solved using the forward-backward algorithm. We only need to compute the gradients of OD demand for each data sample and use the average gradient over all data samples to update the OD demand during the backward iteration. This process is the same as Gradient Descent (GD) method. In addition, the stochastic gradient descent (SGD) method can also be used to solve formulation~\ref{eq:dode4}. In the process of SGD, we evaluate the gradient of OD demand for one randomly selected data sample and then use it to update the OD demand. The comparisons between GD and SGD exist in many machine learning models, readers are refereed to \citep{saad1998online} for more details. Many advanced gradient descent methods can also be used to solve formulation~\ref{eq:dode4}. For example, Adagrad is one of the most representatives of variants of SGD with adaptive step sizes, and it is often used in the optimization of deep neural networks \citep{duchi2011adaptive}.

We can further speed up the solution process for formulation~\ref{eq:dode4} by utilizing the power of multiprocessing. The delayed stochastic gradient descent (delayed-SGD) method can evaluate the gradient of multiple data samples on a multi-core CPU at the same time \citep{zinkevich2009slow}. Each core is responsible for evaluating one single data sample at one time.  Comparing to the traditional SGD, the delayed-SGD makes full use of the multi-core CPU and hence it can solve the MCDODE framework more efficiently.  It is also possible to extend Formulation~\ref{eq:dode4} to incorporate multi-day data that are observed on different links. To do that, we can replace $\vec{L}_i$ and $\vec{M}_i$ to $\vec{L}_{im}$ and $\vec{M}_{im}$ for each day $m$ separately.

\subsection{Solution framework}
\label{sec:framework}
The solution algorithm for MCDODE is summarized in Table~\ref{tab:sol}.
\begin{table}[h]
	\begin{tabular}{p{3cm}p{11.8cm}}
		\textbf{Algorithm}& \textbf{[\textit{MCDODE-FRAMEWORK}]} \\\hline
		\textit{Step 0} & \textit{Initialization.} Initialize the OD demand vector $\{\mathbf{q}_i \}_i$ for each vehicle class. \\\hline
		\textit{Step 1} & \textit{Forward iteration.}  Solve the traffic assignment model presented in equation~\ref{eq:dta} with OD demand $\{\vec{q}_i\}_i$, and construct the tree-based cumulative curve $\chi$ through Growing Tree algorithm presented in Algorithm~\ref{alg:gt}. \\\hline
		\textit{Step 2} & \textit{Variable retrieval.} Extract the link/path travel time from the simulation model, compute the route choice matrix from route choice model by Equation~\ref{eq:gen_choice}, and obtain the DAR matrix from the tree-base cumulative curves by Equation~\ref{eq:dar2}. \\\hline
		\textit{Step 3} & \textit{Backward iteration.} Compute the gradient of OD demand using the backward iteration presented in Equation~\ref{eq:backward} and \ref{eq:grad}. \\\hline
		\textit{Step 4} & \textit{Update OD demand.}  Update the OD demand with the gradient-based projection method discussed in section~\ref{sec:sgd}.\\ \hline
		\textit{Step 5} & \textit{Convergence check.}  Stop when the change of OD demand $\{\mathbf{q}_i \}_i$ is less than tolerance. Otherwise, go to Step 1.\\\hline
	\end{tabular}
	\caption{MCDODE solution framework}
	\label{tab:sol}
\end{table}

\section{Numerical Experiments}
\label{sec:exp}
In this section, we first examine the proposed MCDODE framework in a small network. Estimation results are presented and discussed. We examine the effects of multiprocessing, compare different variants of gradient descent methods, and conduct the sensitivity analysis of the MCDODE framework. In addition, the effectiveness, efficiency and scalability of the MCDODE framework are demonstrated in a large-scale network. All the experiments in this section are conducted on a desktop with Intel Core i7-6700K CPU 4.00GHz $\times$ 8, 2133 MHz 2 $\times$ 16GB RAM, 500GB SSD.

\subsection{A small network}
We first work with a small network with seven links, three paths and one O-D pair, as presented in Figure~\ref{fig:snetwork}. Two classes of vehicles are considered: cars and trucks. Link 1 and Link 7 are OD connectors, and we use the identical triangular fundamental diagram (FD) for the rest of 5 links. In the FD, length of each road segment is $0.55$ mile, free flow speed is $35$ miles/hour for car and $25$ miles/hour for truck, flow capacity is 2,200 vehicles/hour for car and 1,200 vehicles/hour for truck, and the holding capacity is 200 vehicles/mile for car and 80 vehicles/mile for truck.

\begin{figure}[h]
	\centering
	\includegraphics[width=0.85\linewidth]{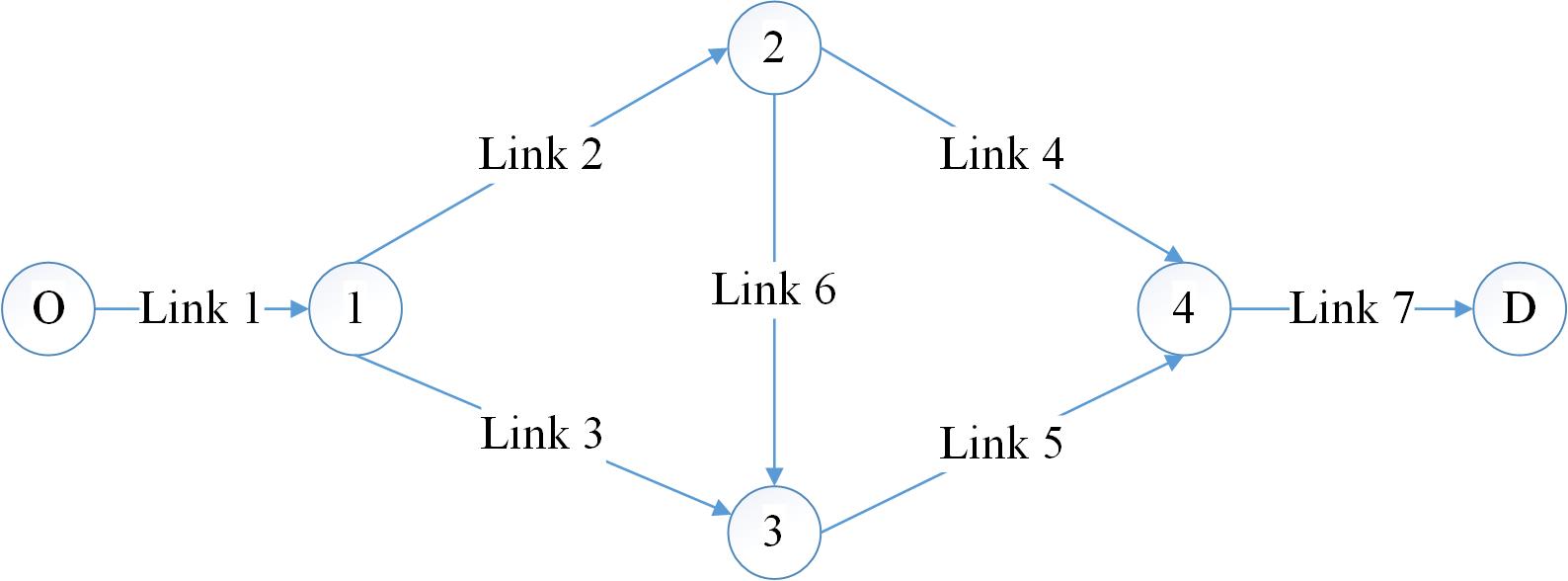}
	\caption{A small network.}
	\label{fig:snetwork}
\end{figure}

We generate the multi-class dynamic OD demand by random number generators and then treat it as the ``true'' OD demand. We generate the observed flow by running the multi-class network simulation model and then adding the noise. The performance of the MCDODE estimation formulation is assessed by comparing the estimated flow with the ``true'' flow (flow includes observed flow, link flow, OD demand) \citep{antoniou2015towards}. We use R-square between the ``true'' flow (travel time) and estimated flow (travel time) to measure the estimation accuracy.

{\bf Baseline setting:} in the small network, we randomly sample the ``true'' car and truck OD demand from uniform distributions $\texttt{Unif}(0, 300)$ and $\texttt{Unif}(0,60)$, respectively. We randomly generate the route choice portions and treat them as unknown, then we run the MAC-POSTS $\Lambda$ to obtain the ``true'' network conditions. We construct the observed flow as follows: firstly we randomly generate the matrix $\{\vec{L}_i\}_i$ by a Bernoulli distribution with $p = 0.5$ for link $3,4,5,6$ and leave the flow of link 2 hidden from all observations; secondly we aggregate the ``true'' link flow to obtain the observed flow by $\{\vec{L}_i\}_i$; thirdly we multiply $1+\varepsilon$ to the ``true'' observed flow to get the observed flow with noise, where $\varepsilon \sim \texttt{Unif}(-\xi, \xi)$ and $\xi \in [0,1)$ represents the noise level. We consider $10$ time intervals, and each time interval represents fifteen minutes. We set $|B| = 10$, and $6$ observations are from car flow and $4$ observations are from truck flow. Assuming we directly observe the travel time for link $3,4,5,6$ for cars and trucks separately, $\{ \vec{M}_i \}_i$ and $E$ can be constructed accordingly. We set $w_1= 1, w_2=0.01$. We also add noise to the observed link travel time using the same method as observed flow. We observe $8$ data samples $M=8$ and the noise level $\xi = 0.1$. We use single-process Adagrad with step size $1$ as the solution method, and the initial OD demand is generated from  $\texttt{Unif}(0, 15)$ and $\texttt{Unif}(0, 3)$ for car and trucks, respectively. The above setting is called the baseline setting.

\subsubsection{Basic estimation results}

In this section, we examine the basic estimation results of the proposed MCDODE framework for the baseline setting. We run the MCDODE framework presented in section~\ref{sec:framework} until convergence. The change of loss $\L$ against the number of iterations using Adagrad is presented in Figure~\ref{fig:loss}.

\begin{figure}[h]
	\centering
	\includegraphics[width=0.85\linewidth]{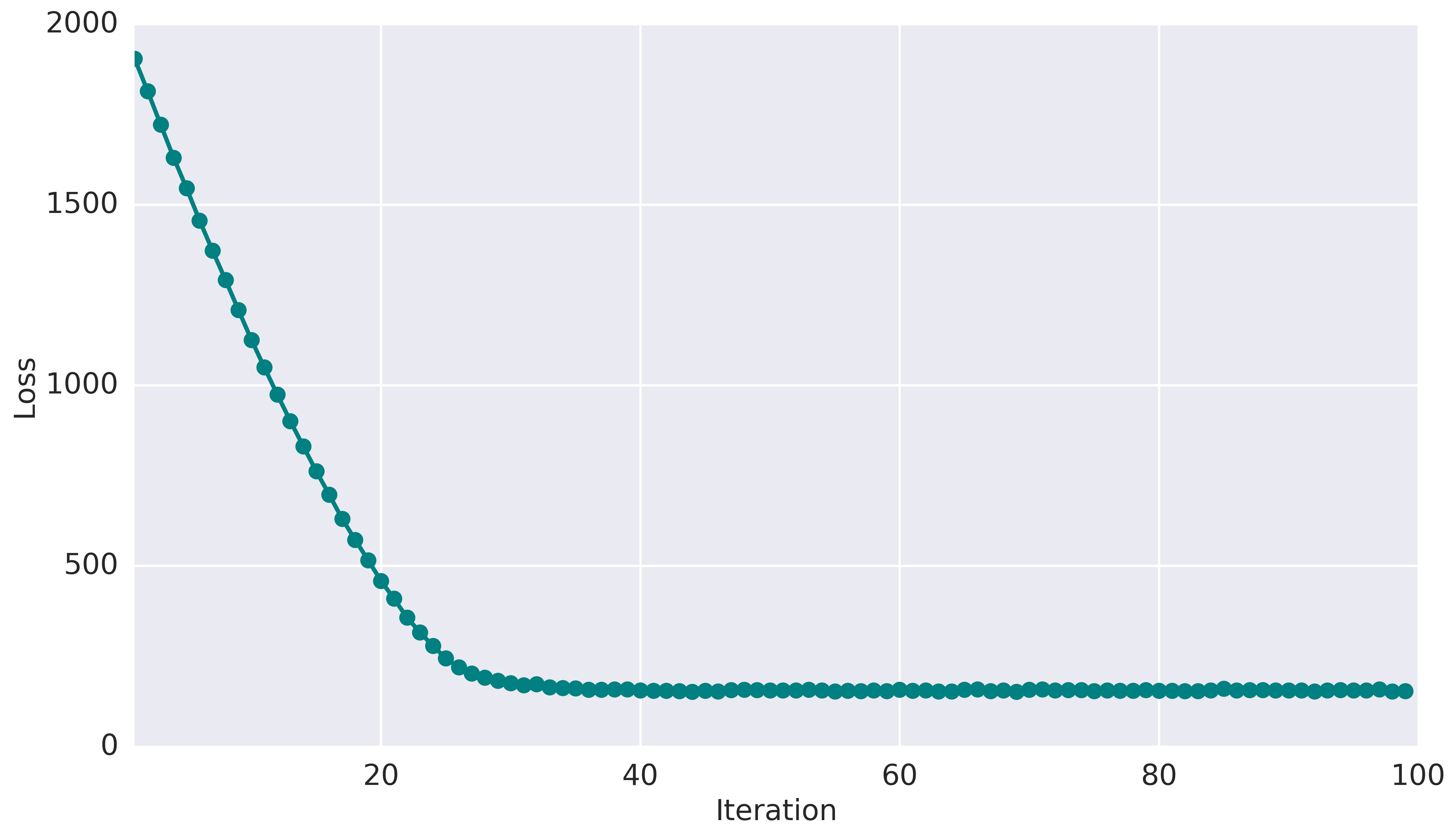}
	\caption{Convergence curve for the loss $\L$}
	\label{fig:loss}
\end{figure}

To analyze the convergence of the observed flow and travel time separately, we decompose the loss $\L$ into four components: car flow, car travel time, truck flow and truck travel time. We plot the loss for the four components separately, and the results are presented in Figure~\ref{fig:loss2}. Note we normalize the loss such that it is between $0$ and $1$.

\begin{figure}[h]
	\centering
	\includegraphics[width=0.85\linewidth]{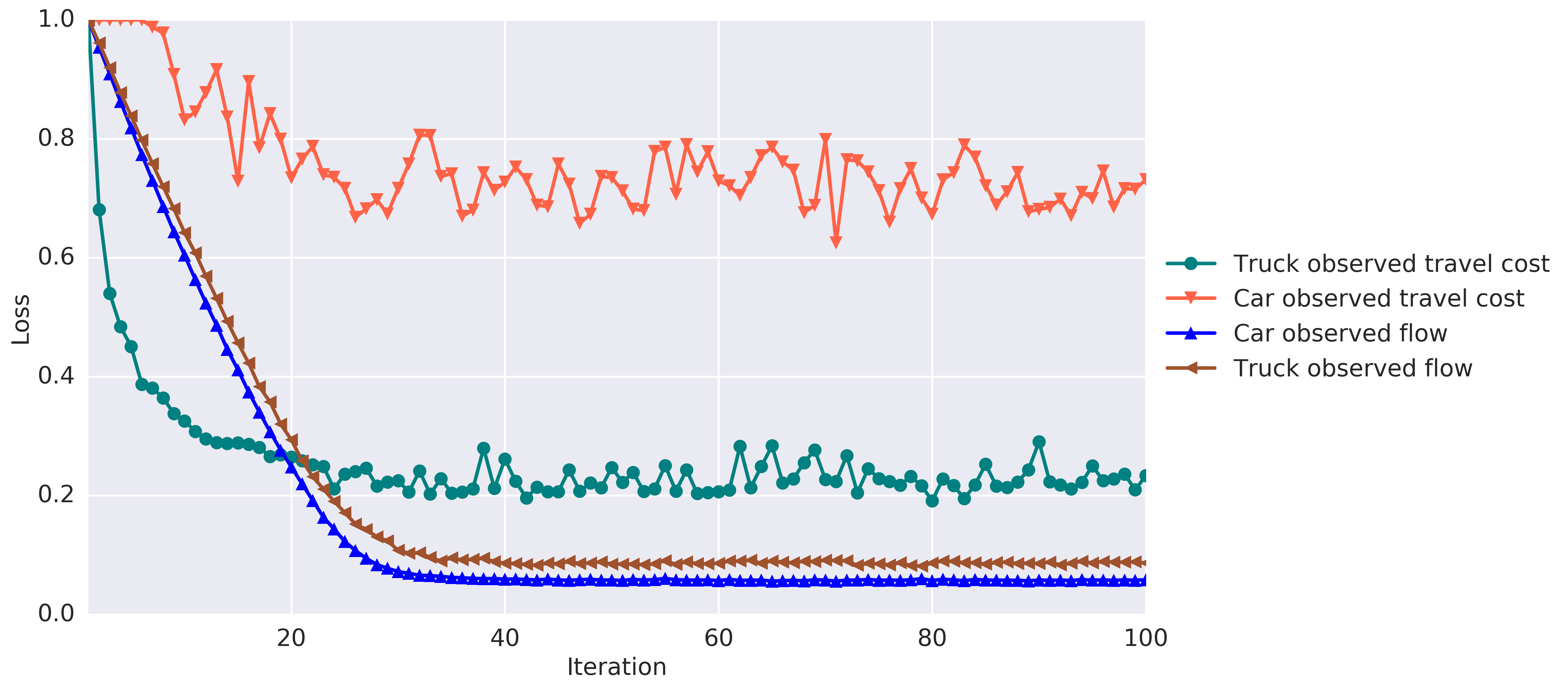}
	\caption{Decomposed convergence curve for observed flow and travel time (normalized)}
	\label{fig:loss2}
\end{figure}

The comparisons between the ``true'' and estimated values for observed flow, link flow, link travel time and OD demand are presented in Figure~\ref{fig:sy}, \ref{fig:sx}, \ref{fig:stt} and \ref{fig:sf}, respectively.

\begin{figure}[h]
	\centering
	\includegraphics[width=0.7\linewidth]{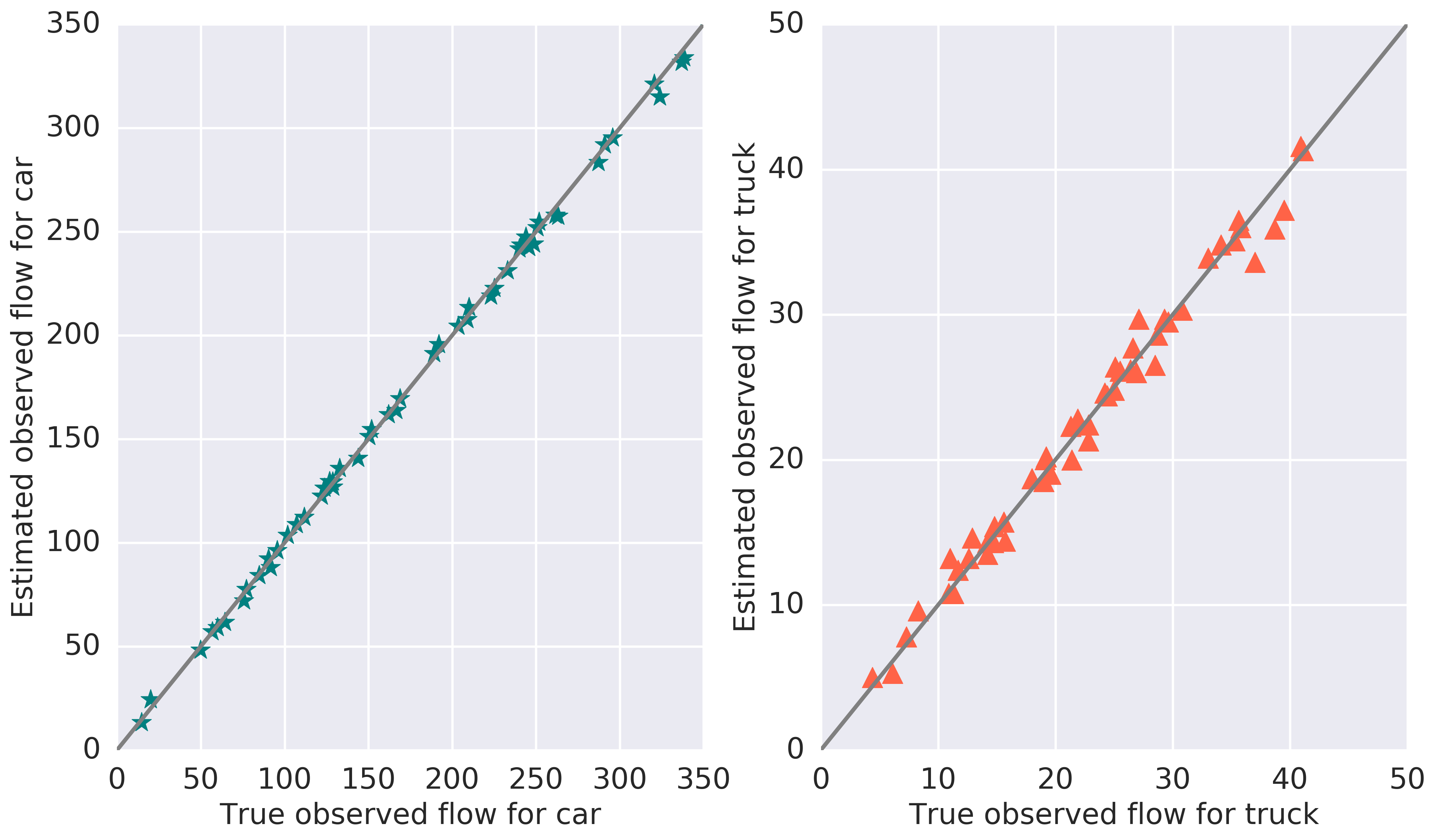}
	\caption{Estimated and ``true'' observed flow for cars and trucks (unit:vehicle/15mins).}
	\label{fig:sy}
\end{figure}

\begin{figure}[h]
	\centering
	\includegraphics[width=0.7\linewidth]{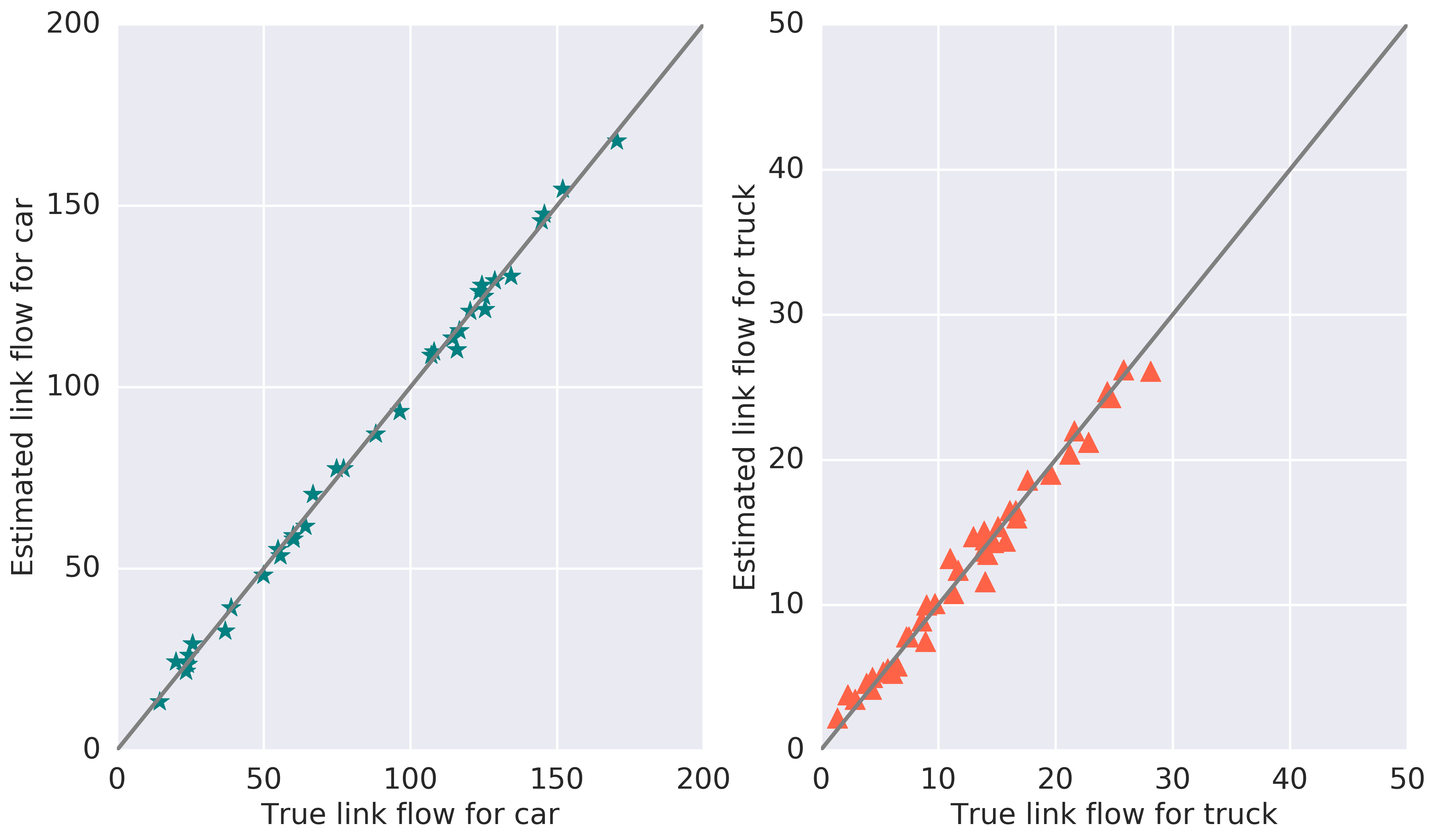}
	\caption{Estimated and ``true'' link flow for cars and trucks (unit:vehicle/15mins).}
	\label{fig:sx}
\end{figure}

\begin{figure}[h]
	\centering
	\includegraphics[width=0.7\linewidth]{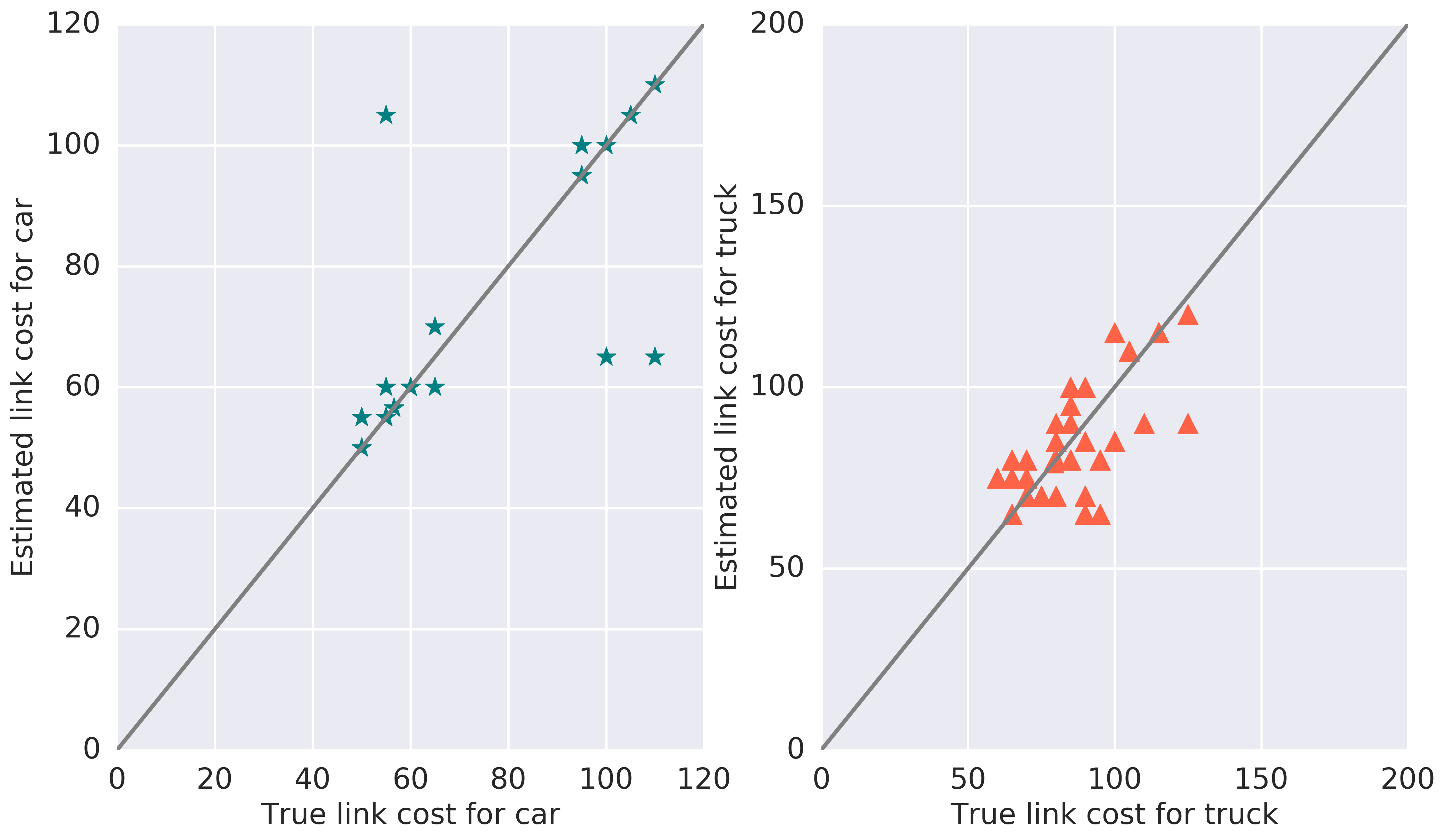}
	\caption{Estimated and ``true'' link travel time for cars and trucks (unit:seconds).}
	\label{fig:stt}
\end{figure}

\begin{figure}[h]
	\centering
	\includegraphics[width=0.7\linewidth]{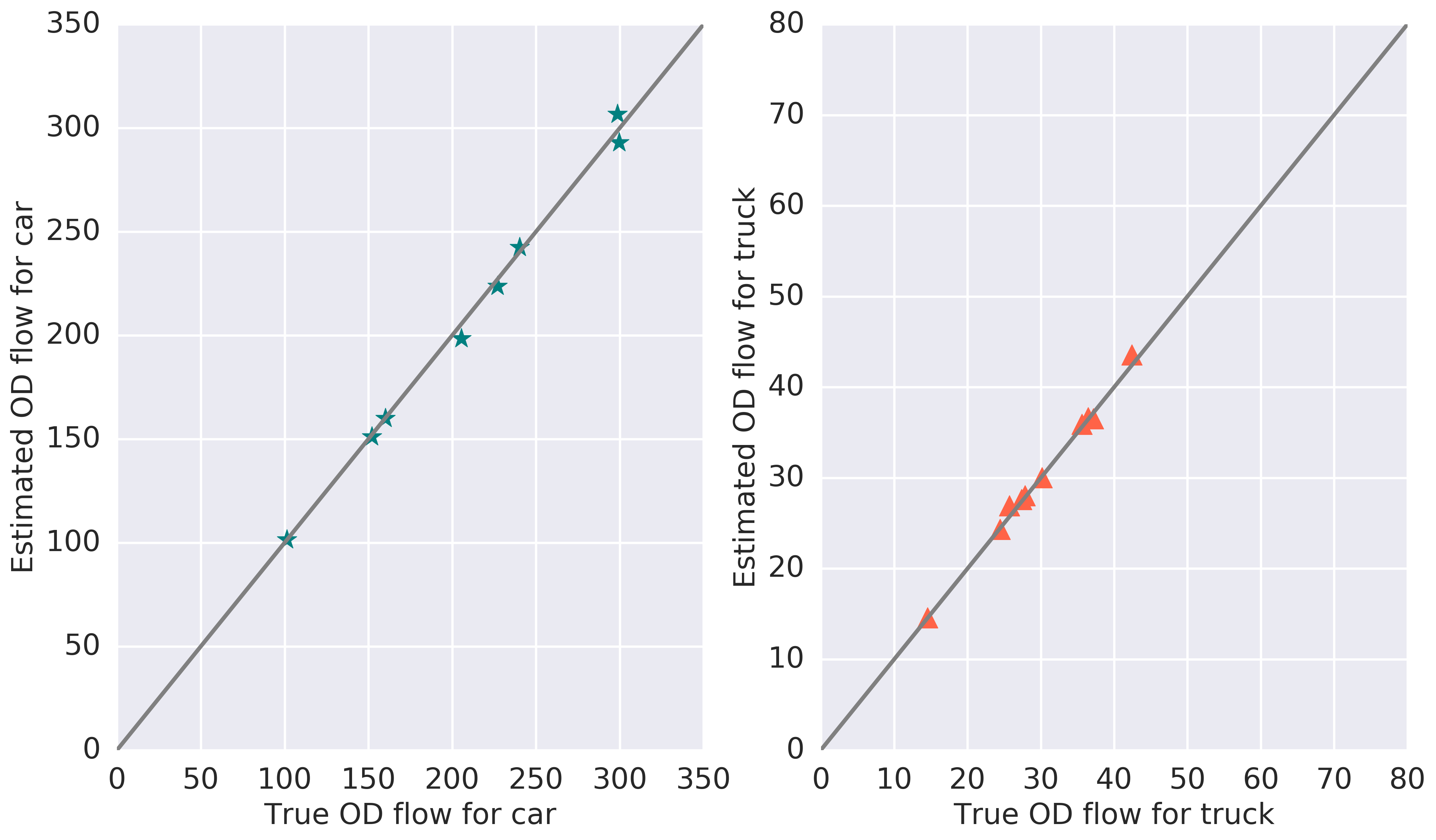}
	\caption{Estimated and ``true'' OD demand for cars and trucks (unit:vehicle/15mins).}
	\label{fig:sf}
\end{figure}

\begin{table}[h]
	\centering
	\begin{tabular}{c|cc}
		\hline
		~ & Car & Truck \\
		\hline\hline
		Observed flow & 0.9992 & 0.9858 \\
		Link flow & 0.9982 & 0.9808 \\
		Link travel time & 0.9309 & 0.9586 \\
		OD demand & 0.9965 & 0.9940 \\
		\hline
	\end{tabular}
	\caption{R-square between the ``true'' and estimated flow and travel time for cars and trucks}
	\label{tab:sr}
\end{table}

The  R-squares between the ``true'' and estimated flow (travel time) are presented in Table~\ref{tab:sr}. The proposed MCDODE framework yields accurate estimation of the multi-class dynamic OD demand in the small network. The average R-squares for car flow and truck flow are above 0.98. The estimation accuracy for truck is lower than car, which is probably attributed to the low truck demand. Since the multi-class traffic loading model is discretized and stochastic \citep{qian2017modeling}, low demand may incur a large variance in the simulation results. Therefore, the gradient of truck flow becomes noisy when the demand is low.

The R-square for the travel time is lower than that for flows, since the derivative of travel time  $\frac{\partial\bar{\Lambda}(\{\vec{x}_i\}_i)}{\partial \vec{x}_i}$ is approximated by the simulation rather than evaluated in a closed-form. Again, due to the discretization and stochasticity of the simulation model, the approximations of $\frac{\partial\bar{\Lambda}(\{\vec{x}_i\}_i)}{\partial \vec{x}_i}$ can be noisy.

\subsubsection{Comparing different gradient-based methods}
In this section, we examine the performance of three gradient-based methods: gradient descent (GD), stochastic gradient descent (SGD) method and Adagrad. We solve the MCDODE problem for the baseline setting three times using different gradient-based methods, and we plot the convergence curves for the three methods in Figure~\ref{fig:comp}.

\begin{figure}[h]
	\centering
	\includegraphics[width=0.85\linewidth]{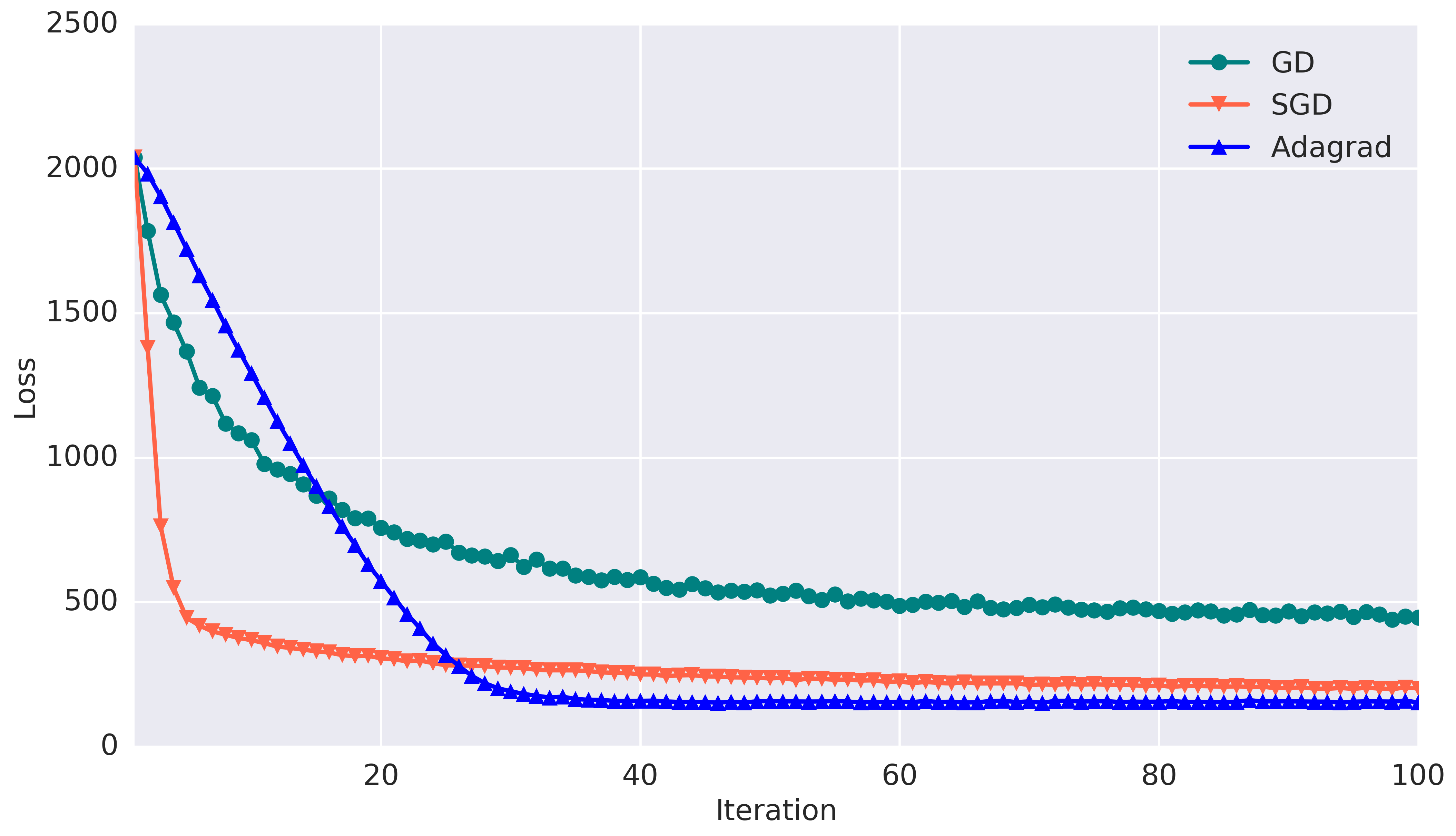}
	\caption{Comparison among GD, SGD and Adagrad}
	\label{fig:comp}
\end{figure}

For all three method, it takes less than $2$ minutes to complete the $100$ iterations. One can clearly see that  SGD outperforms the GD throughout the $100$ iterations. Though GD and SGD converge faster in the first $20$ iterations, the Adagrad outperforms both methods in terms of convergence rate and final loss after $100$ iterations. The reason for the best performance of Adagrad is probably because Adagrad can select the step size adaptively during the solution, hence it maintains a good convergence rate throughout the $100$ iterations. Adagrad is adopted as the standard method to solve the MCDODE formulation in the baseline setting.

\subsubsection{Multiprocessing}

In this section, we demonstrate the power of multiprocessing in solving the MCDODE framework. We solve the baseline setting four times using different number of processes. We use the delayed version of Adagrad to enable the multiprocessing \citep{zinkevich2009slow}.
We examine the convergence curves for 1, 2, 4, and 8 processes and plot the results in Figure~\ref{fig:mul}. We note that different from previous figures, the x-axis in Figure~\ref{fig:mul} is the time rather than iterations.
\begin{figure}[h]
	\centering
	\includegraphics[width=0.85\linewidth]{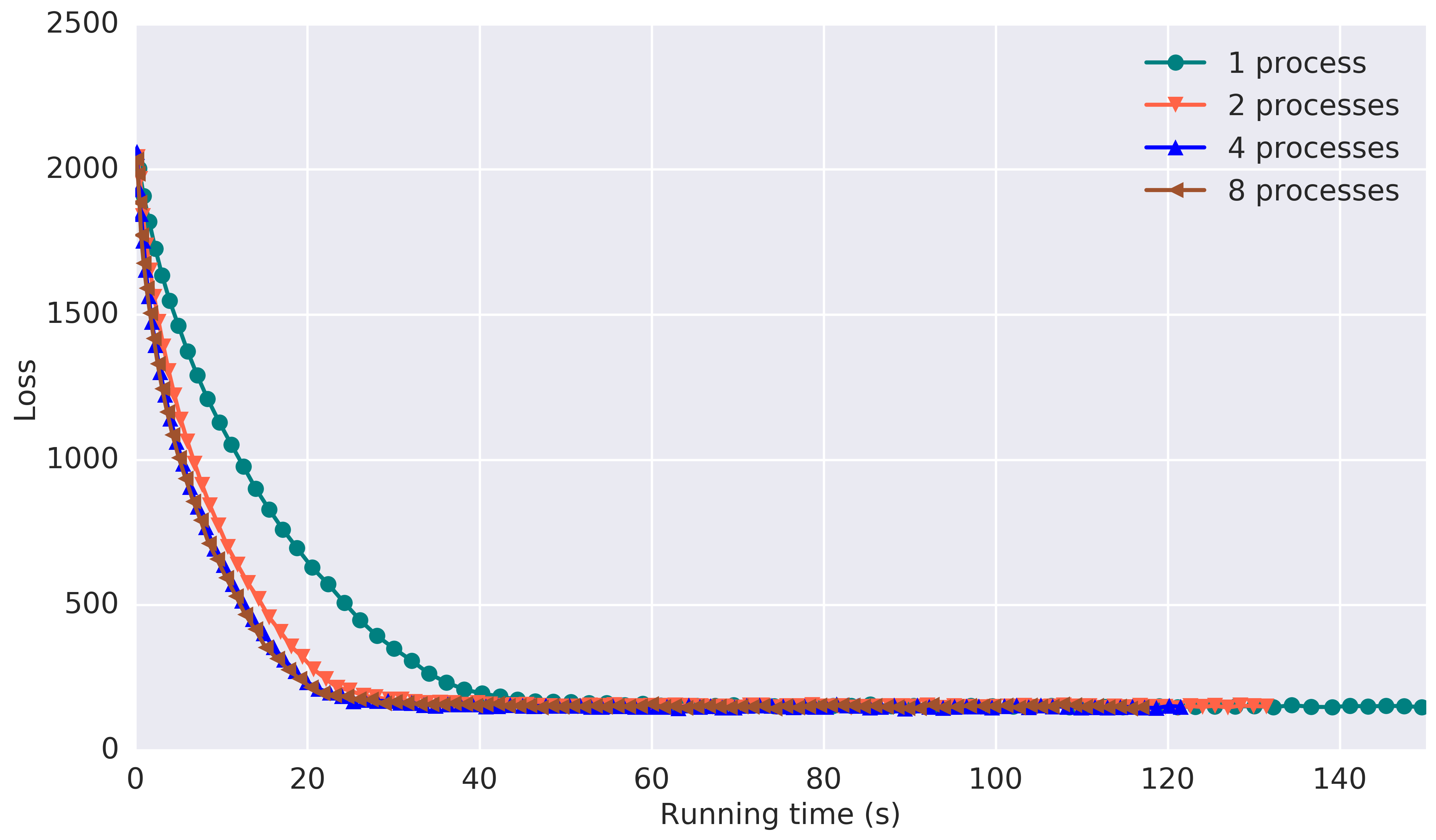}
	\caption{Convergence curves using 1, 2, 4, 8 processes}
	\label{fig:mul}
\end{figure}

All four methods converges to the same optimal solution. One can see that the 1-process method converges in $50$s, 2-process method converges in $30$s, and the 4-process method converges in $20\sim25$s. By using the multiprocessing, the solving time for the MCDODE framework can be reduced by at least a half. The marginal effects of adding processes decreases when the number of processes increases, as a result of the increasing communication costs among different processes. Since we conduct the experiments on an eight-core CPU, the eight-process method makes use of all computing resources of the CPU and hence it achieves the best convergence rate.

\subsubsection{Impact of data quantity}

In this section, we analyze the impact of number of data samples on the MCDODE framework. We solve for the baseline setting, while the number of data samples varies from $1$ to $256$. We keep track of the convergence curves for different number of data samples, the results are plotted in Figure~\ref{fig:nd}.

\begin{figure}[h]
	\centering
	\includegraphics[width=0.85\linewidth]{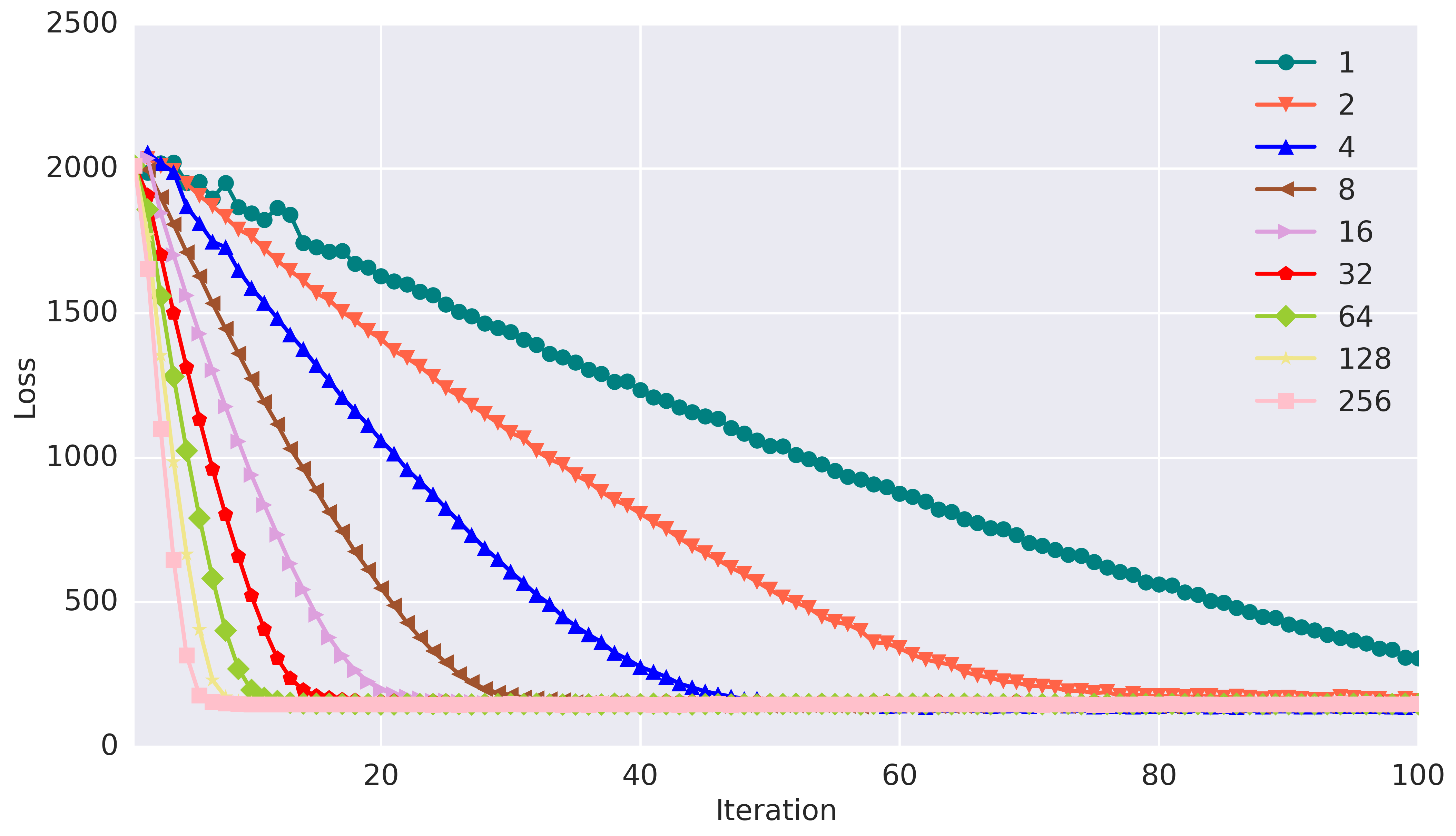}
	\caption{Convergence curves with different number of data samples}
	\label{fig:nd}
\end{figure}

As can be seen from Figure~\ref{fig:nd}, the convergence rate increases when the number of data sample increases. The solution method with one data sample does not converge in $100$ iterations while the solution method with $256$ data samples converges within $10$ iterations. In the large-scale networks, limited number of iterations can be conducted due to the lack of computational resources and time constraints, hence more data samples are usually required to ensure estimation accuracy.

\subsubsection{Impact of noise level}

We further demonstrate the impact of noise level on the proposed solution algorithm. We solve the MCDODE framework for the baseline setting, while we change the noise level from $0$ to $0.9$. The convergence curves under different noise levels are presented in Figure~\ref{fig:nl}.

\begin{figure}[h]
	\centering
	\includegraphics[width=0.85\linewidth]{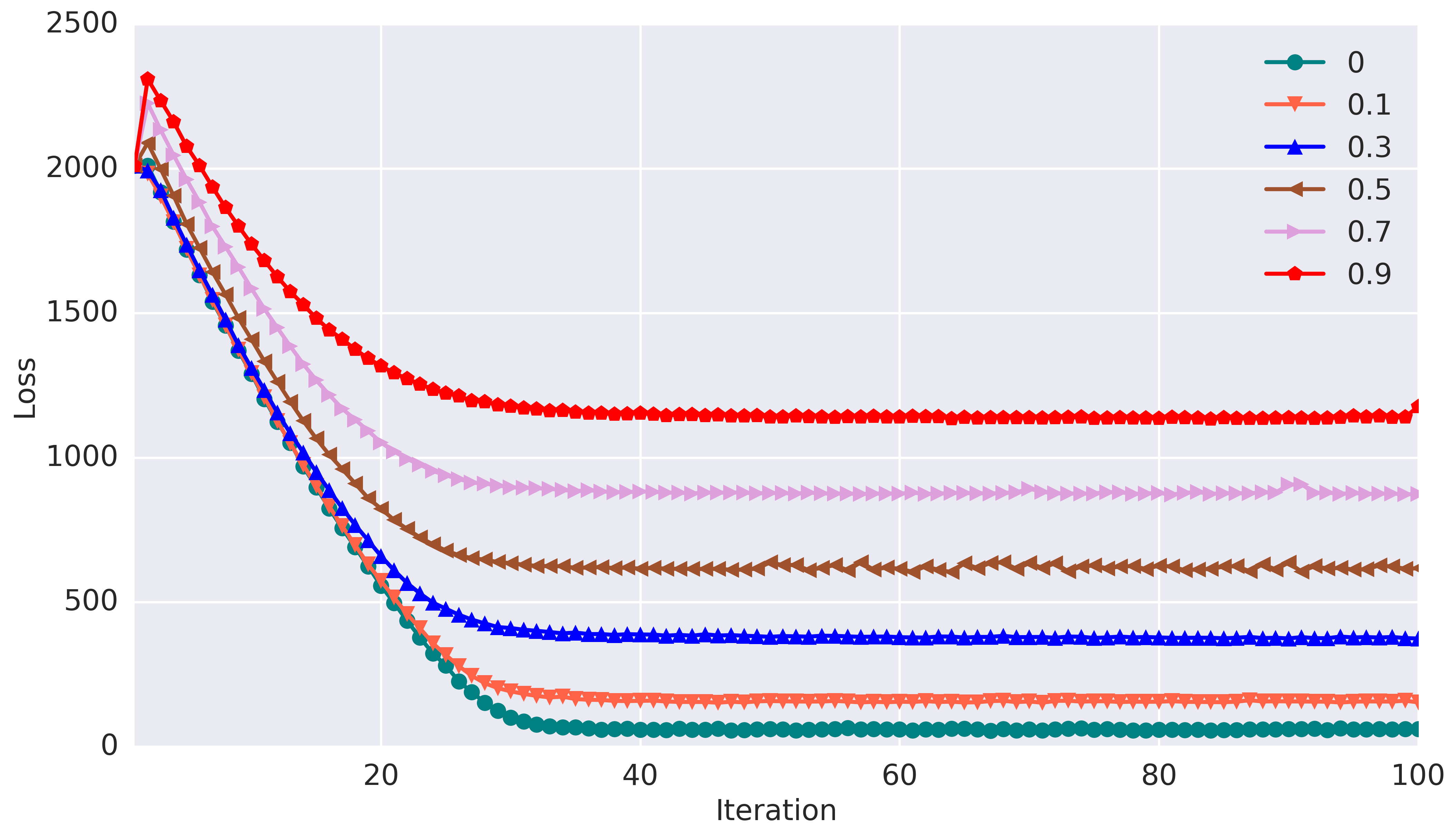}
	\caption{Convergence curves under different noise levels}
	\label{fig:nl}
\end{figure}

One can see from Figure~\ref{fig:nl} that the noise level has a significant impact on the estimation accuracy of the multi-class OD demand. When there is no noise in the observed data, the estimation method can converges to nearly zero loss in $30$ iterations. When the noise level is high, the estimation method does not converge to the optimal solution. Especially when noise level is $0.9$, the loss can only be reduced by half and the R-square for the estimated OD is around $0.7$.

\subsubsection{Sensitivity analysis}

In this section, we conduct the sensitivity analysis of the proposed MCDODE framework in terms of initial OD demand, ``true'' OD demand and step sizes.

Firstly, we perform the sensitivity analysis on the initial OD demand. We solve the MCDODE framework for the baseline setting for $100$ times. In each time, we change the initial OD demand by random generators $\texttt{Unif}(0, 15)$ for car demands and $\texttt{Unif}(0, 3)$ for truck demands. We compute the R-squares of the estimated OD demand for all $100$ estimations and the boxplot for the R-squares is presented in Figure~\ref{fig:si}.

\begin{figure}[h]
	\centering
	\includegraphics[width=0.85\linewidth]{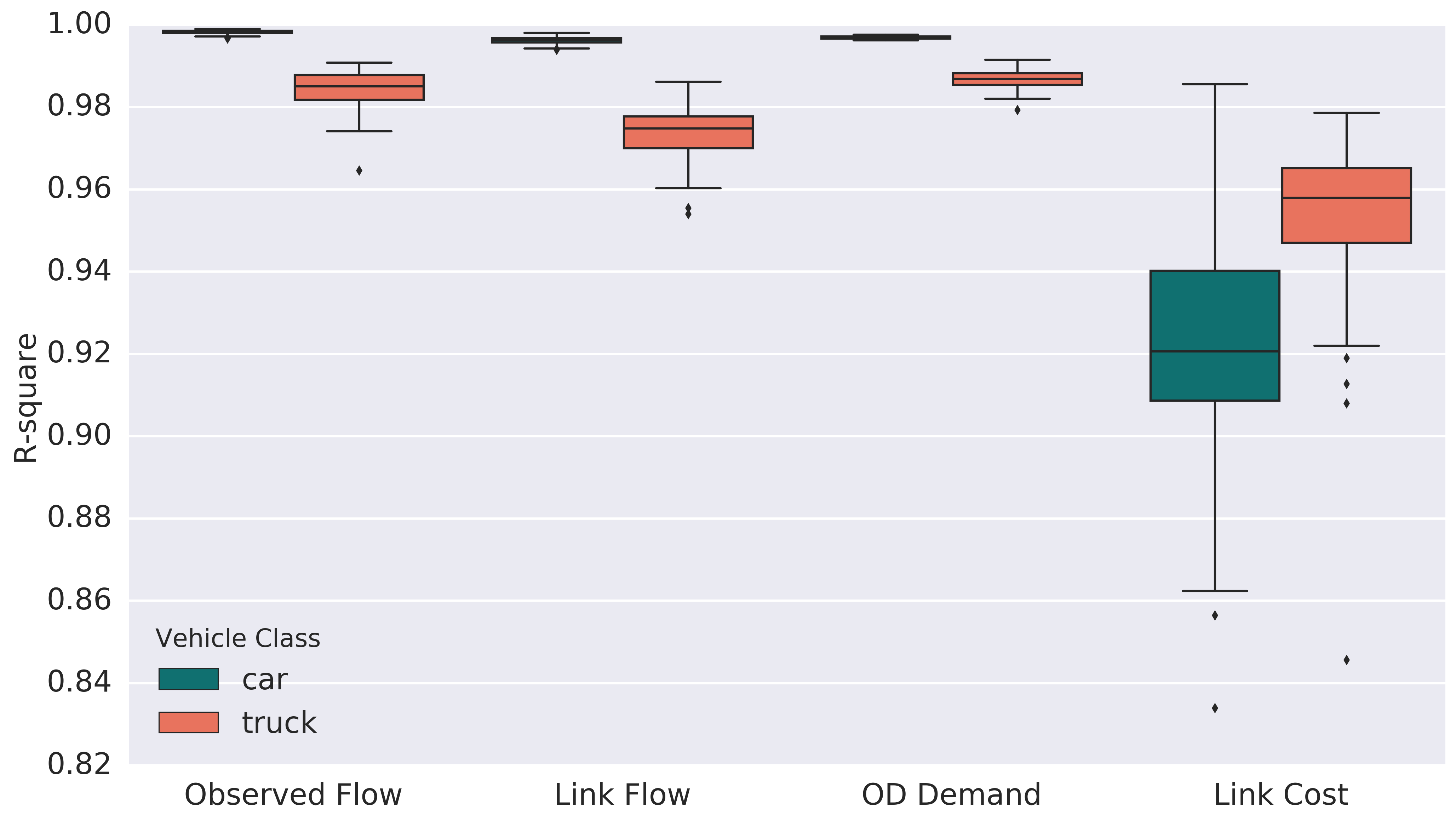}
	\caption{Boxplot for R-squares with different initial OD demand.}
	\label{fig:si}
\end{figure}

As can be seen from Figure~\ref{fig:si}, the R-squares for car flow are above $0.98$ and the variance is low for different initial OD demand, while the variance of R-squares for truck flow is high. Especially for link flow, the R-square between the ``true'' and estimated link flow for truck can be as low as $0.95$. However, all the R-squares for truck flow are still above $0.9$. The results imply that the proposed method is generally robust to the initial OD demand, but estimating the truck demand is more challenging than estimating the car demand. To ensure a satisfactory estimation result, it is desirable to run the MCDODE framework multiple times and chose the best one as the final OD demand. In contrast, the R-square of link travel time for car is lower than that for trucks, which is probably because car speeds can vary in a wide range but truck speed is relatively stable.

Secondly, we fix the initial OD demand and solve for the baseline setting for $100$ times. In each run, we sample the ``true'' OD demand from $\texttt{Unif}(0, 300)$ and $\texttt{Unif}(0,60)$ for car and truck, respectively. We compute the R-squares of estimated OD demand for the $100$ runs and the boxplot for the R-squares is presented in Figure~\ref{fig:st}.

\begin{figure}[h]
	\centering
	\includegraphics[width=0.85\linewidth]{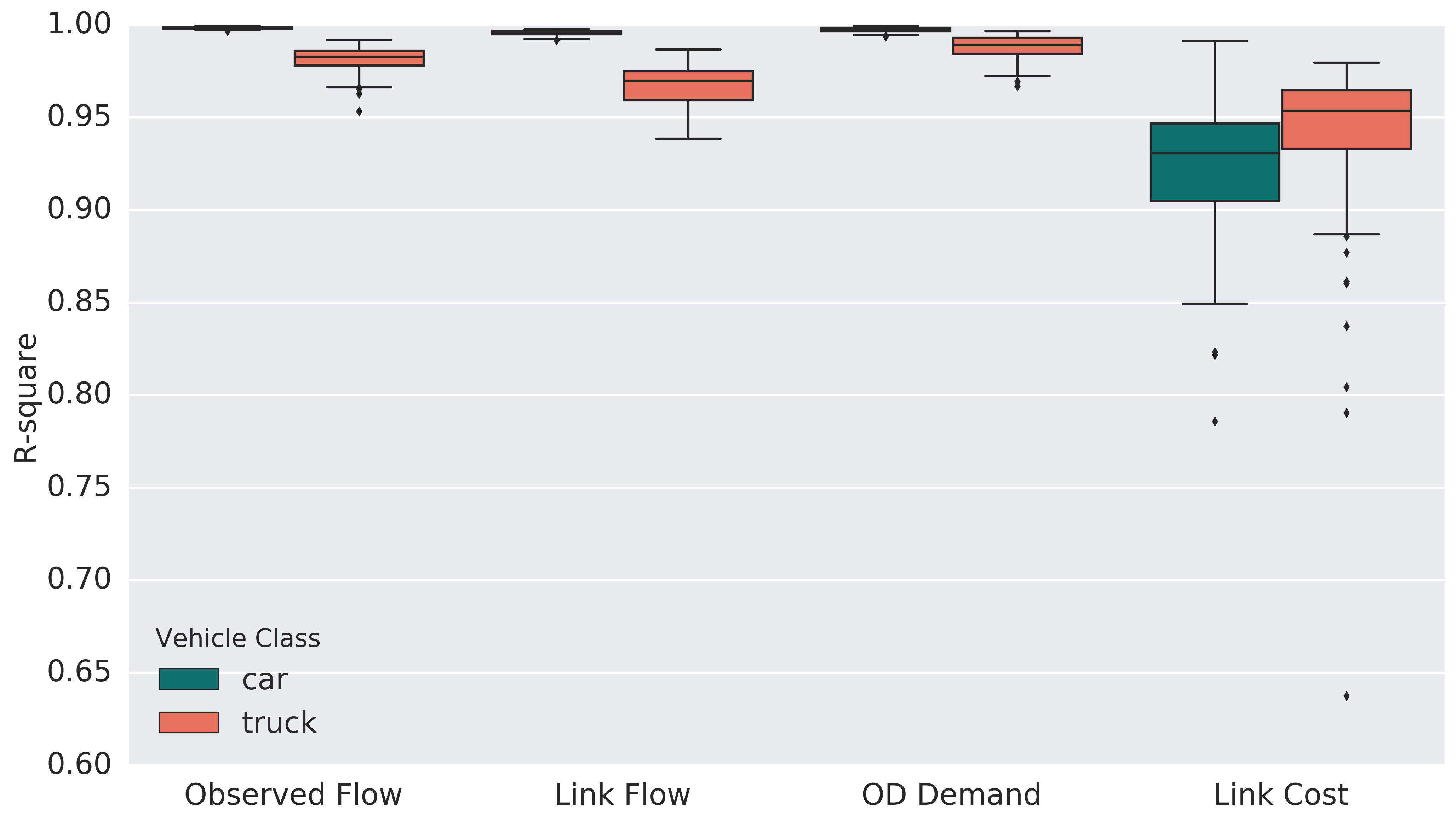}
	\caption{Boxplot for R-squares with different ``true'' OD demand.}
	\label{fig:st}
\end{figure}

The MCDODE framework achieves satisfactory accuracy for different ``true'' OD demands, and most of the R-squares for both car and truck floware over $0.9$. The R-square for truck flow is lower than that for car flow, and the variance of R-squares for truck flow is also higher. As we discussed above, estimating the truck OD demand is more challenging, as a result of the discretized and stochastic behaviors of trucks in the traffic simulation model. In addition, the R-squares of link travel time for cars are lower than that for trucks, which is similar to previous study for the initial OD demand.

Thirdly, we examine the Adagrad method with different step sizes. We solve the MCDODE framework for $7$ times under the baseline setting. In each run, we vary the step size from $0.01$ to $100$, and the convergence curves are presented in Figure~\ref{fig:ss}.
\begin{figure}[h]
	\centering
	\includegraphics[width=0.85\linewidth]{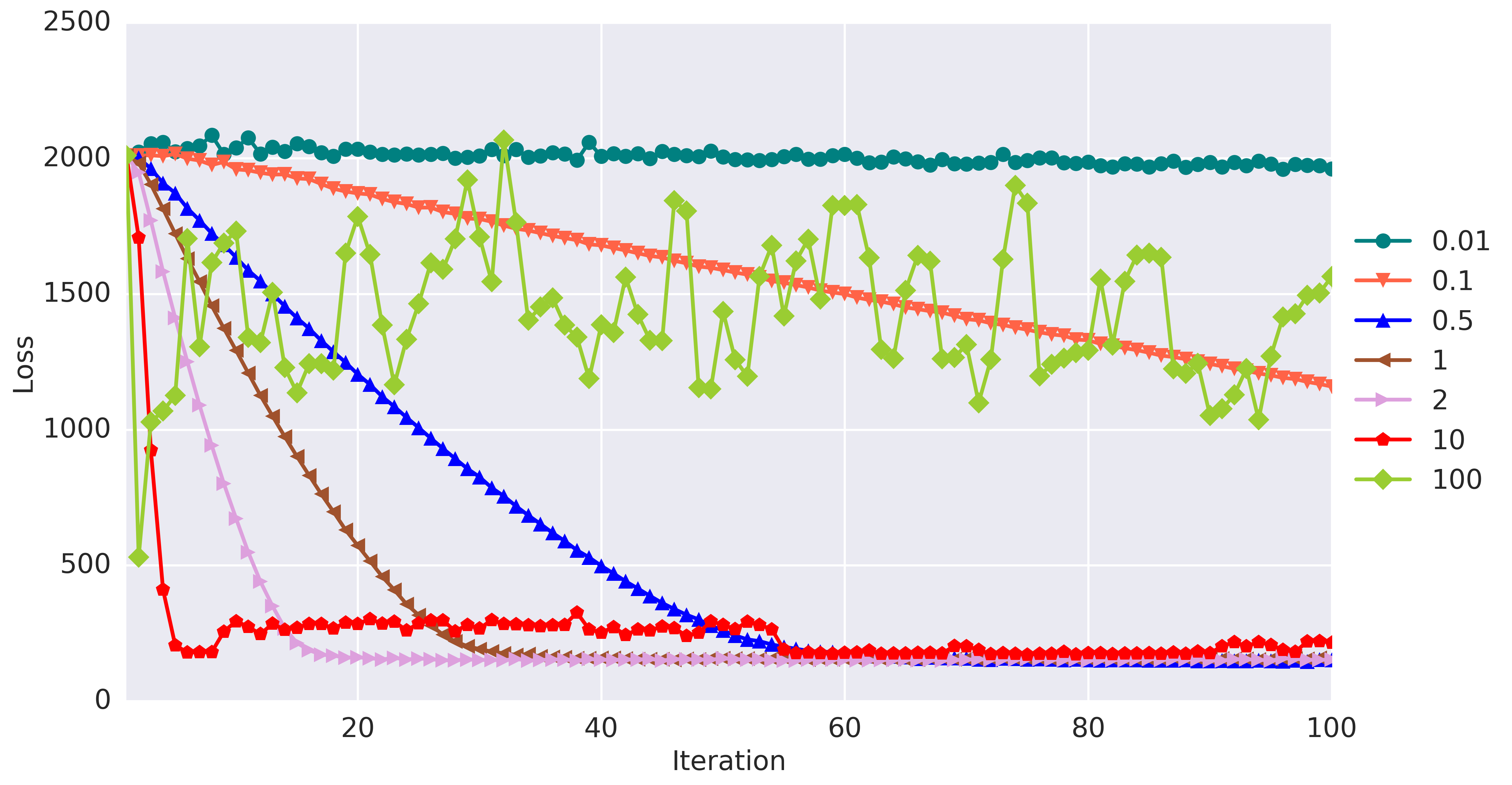}
	\caption{Convergence curves with different step sizes}
	\label{fig:ss}
\end{figure}

One can see that the Adagrad is robust to the step size, and any step size between $0.5$ and $2$ can guarantee the method to converge to the optimal solution within $60$ iterations. When the step size is too small, the method converges slowly; when the step is too large, the convergence becomes unstable and fluctuating.

\clearpage

\subsection{A large-scale network: southwestern Pennsylvania region}

In this section, we perform the MCDODE on a large-scale network for the southwestern Pennsylvania region (Figure \ref{fig:network}). The network covers ten counties of southwestern Pennsylvania region, with the Pittsburgh city located in the center. The approximate range of the 10 counties are marked by the black quadrangle in Figure \ref{fig:network}. There are around $2.57$ million population and 7,112 square miles area in the network. 
All parameters for the Pittsburgh network are listed in the Table \ref{tbl:parameters}.

\begin{figure}[h]
	\centering
	\includegraphics[width=0.90\linewidth]{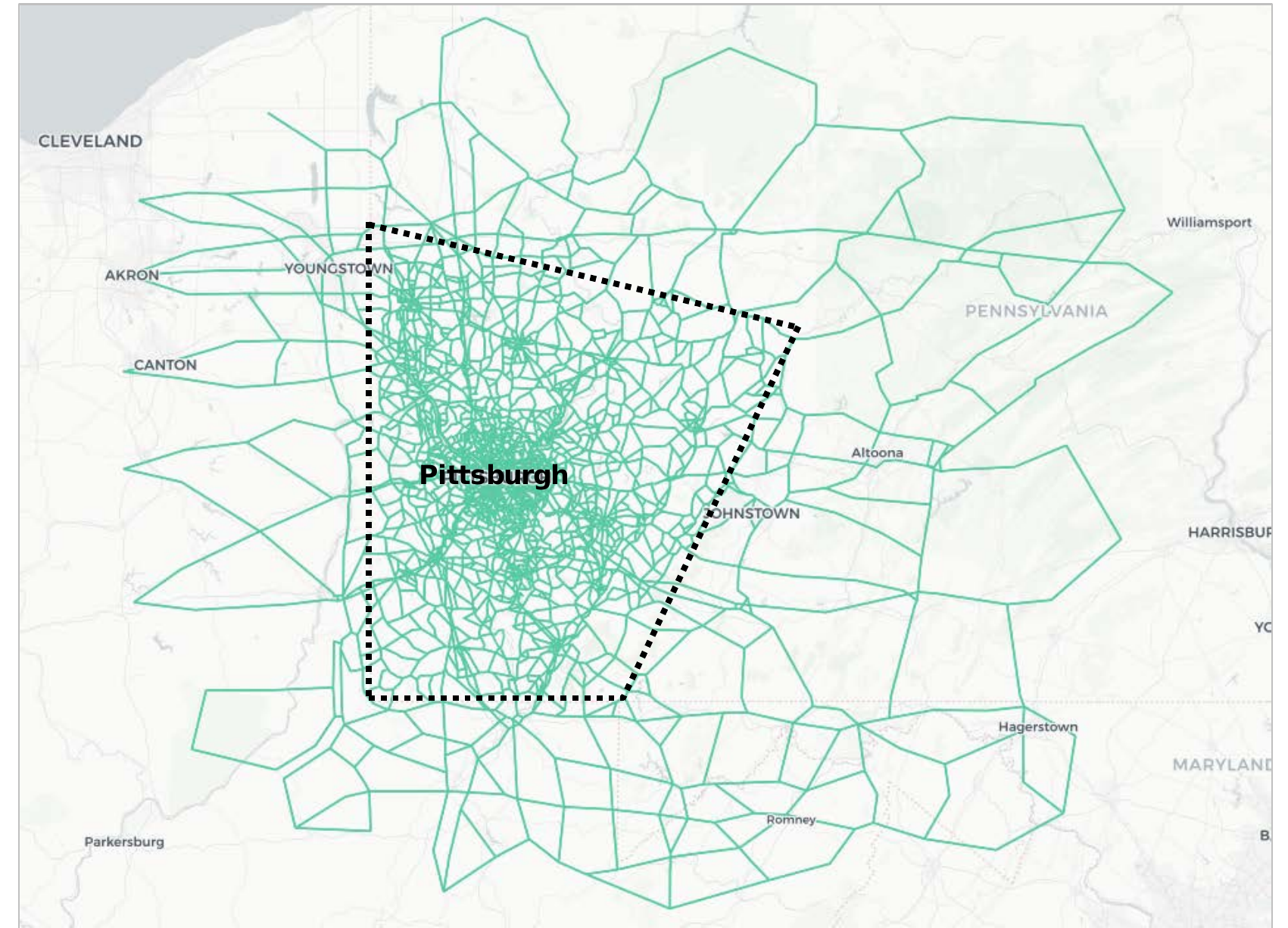}
	\caption{An overview of the network for the southwestern Pennsylvania region}
	\label{fig:network}
\end{figure}

\begin{table}[h]
	\centering
	\caption{Network parameters}
	\label{tbl:parameters}
	\begin{tabular}{@{}ll@{}}
		\hline
		Name                                       & Value              \\ \hline\hline
		studying period (weekday)                  & 6:00 AM - 11:00 AM \\
		simulation unit time & 5 seconds          \\
		Length of time interval             & 15 minutes          \\
		Number of time intervals & 30 \\
		number of links                            & 16,110             \\
		number of nodes                            & 6,297              \\
		number of origins (destinations)           & 283                \\
		number of origin-destination pairs         & 80,089             \\
		\hline
	\end{tabular}
\end{table}

We run the MCDODE framework with traffic flow data and travel time (traffic speed) data. The traffic flow data are obtained from the Pennsylvania Department of Transportation (PennDOT). The data are collected annually for some selected locations on the Pennsylvania state-owned roads, for each hour of the day and for one day of the year. The car traffic volume counts and truck traffic volume counts are collected separately, where car traffic volume counts are measured for all passenger cars and truck traffic volume counts includes all kinds of trucks at the measured location. The traffic count can be either one-directional or bi-directional. Since all count data are measured in hours, in data pre-processing we smooth the hourly count data to 15-minute interval. There are $608$ locations in total that has valid car and truck volume counts for MCDODE. Traffic speed data are obtained Federal Highway Administration (FHWA) for the year 2016. The speed data are observed every 5-minutes of the day for highway segments, the data are also classified to cars and trucks. We aggregate the travel time data to 15-minute interval. In total,
there are 945 locations with valid car and truck travel time observations.

The initial OD demands for cars and trucks are randomly generated from a uniform distribution $\texttt{Unif}(0, 0.01)$ and $\texttt{Unif}(0,0.001)$, respectively. We aggregate all the traffic flow and travel time observations to a single data sample and use the single-process Adagrad method to solve MCDODE. The step size is $e^{-5}$. We set $w_1= 1, w_2=0.01$. We use the hybrid dynamic traffic assignment model as the route choice model \citep{qian2013hybrid}, the adaptive ratio is $0.2$ for cars and $0$ for trucks. The MCDODE framework runs for $55$ iterations. In first $35$ iterations, both OD demand for car and trucks are updated simultaneously; while for the last $20$ iterations, we only update the truck demand since its more challenging to estimate.

The convergence of the proposed MCDODE framework is presented in Figure~\ref{fig:lc}. Overall, the solution algorithm performs well and the objective function converges fast. It takes around $25 \times 55 = 1375$ minutes (around $23$ hours) to complete the $55$ iterations. For each iteration, the traffic simulation takes around $7$ minutes, and the other $18$ minutes are used for the demand estimation. Constructing DAR matrix is identified as the bottleneck in the demand estimation.

\begin{figure}[h]
	\centering
	\includegraphics[width=0.7\linewidth]{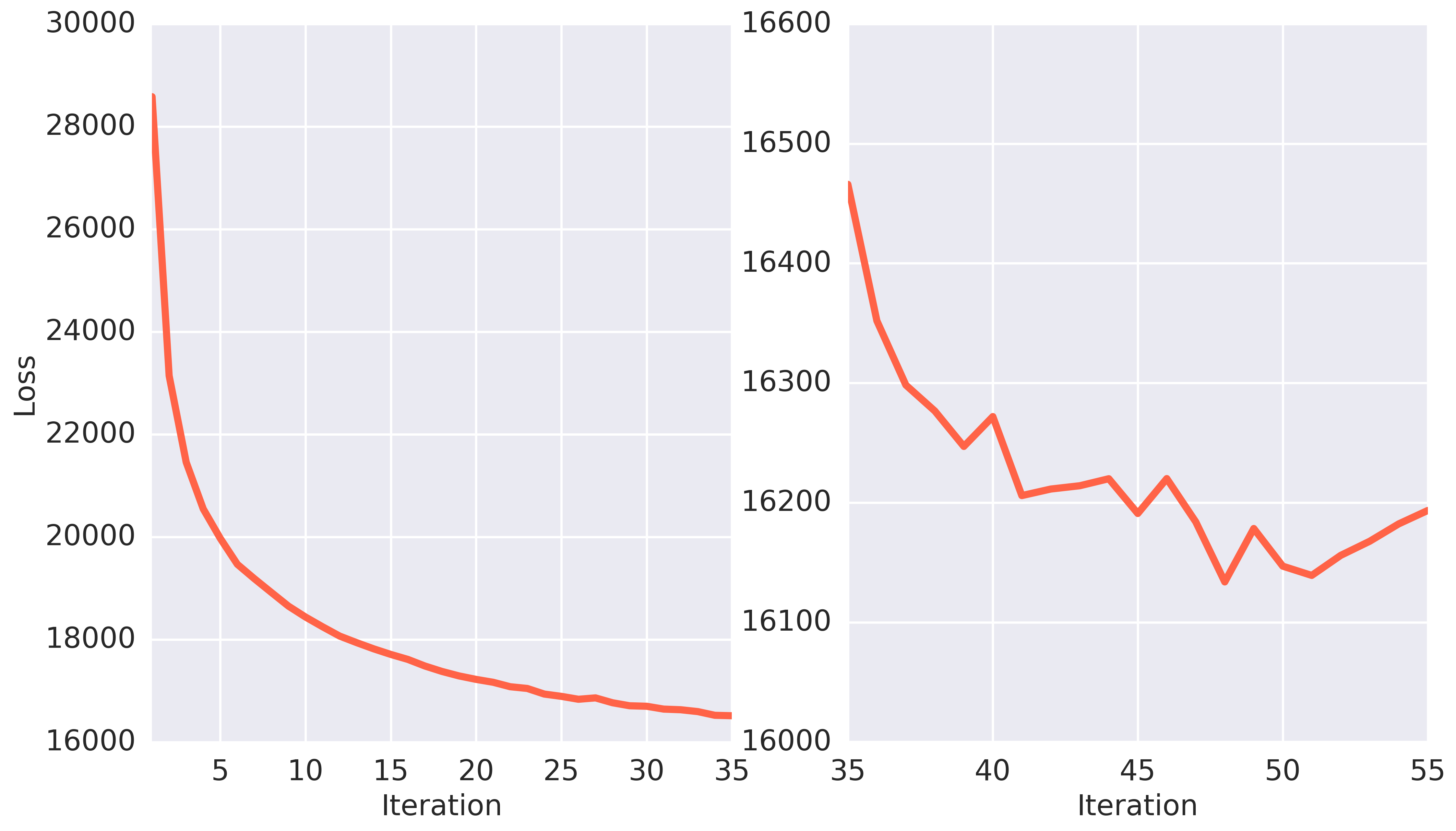}
	\caption{Convergence of the objective function for 55 iterations \footnotesize{(OD demands for cars and trucks are updated simultaneously in first $35$ iterations, while only truck OD demand is updated in last 20 iterations.)}}
	\label{fig:lc}
\end{figure}

The comparisons between the observed and estimated flow are presented in Figure~\ref{fig:lgf}, and R-squares are $0.66$ and $0.59$ for car and truck, respectively. One can see that the truck flow is roughly one tenth of the car flow, and the estimation accuracy for car flow is higher than truck flow.  Overall, the results of the MCDODE framework is compelling and satisfactory for such a large network.

\begin{figure}[h]
	\centering
	\includegraphics[width=0.7\linewidth]{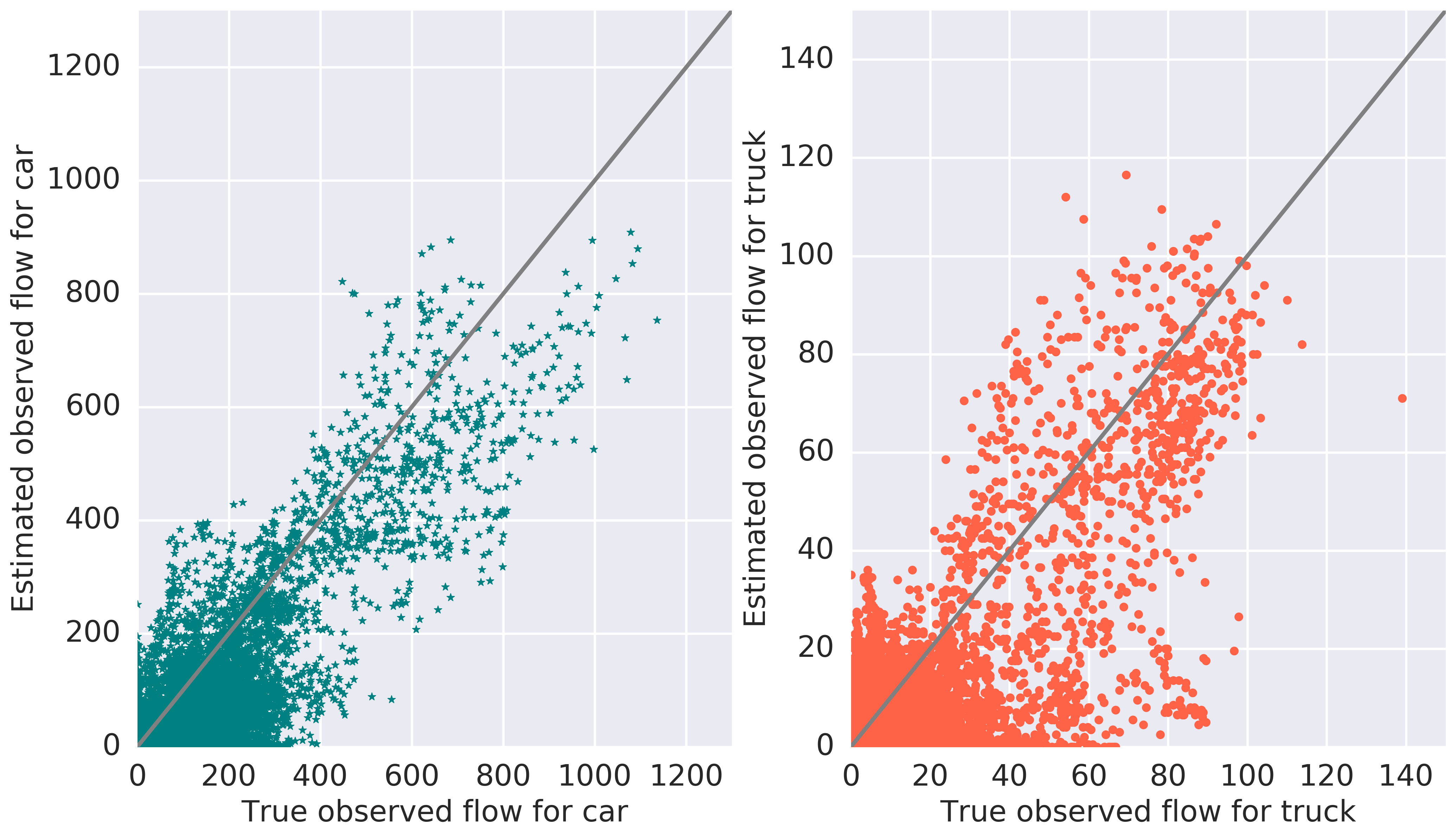}
	\caption{Estimated and observed flow for cars and trucks (unit:vehicle/15mins)}
	\label{fig:lgf}
\end{figure}

We also plot the comparisons between the observed and estimated link speed in Figure~\ref{fig:lgs}. The reason we plot the link speed instead of link travel time is that the link travel time can vary in a wide range for large scale networks, but link speed usually varies between $20$ to $80$ miles/hour. Hence, visualizing the link speed is more straightforward and legible.

\begin{figure}[h]
	\centering
	\includegraphics[width=0.7\linewidth]{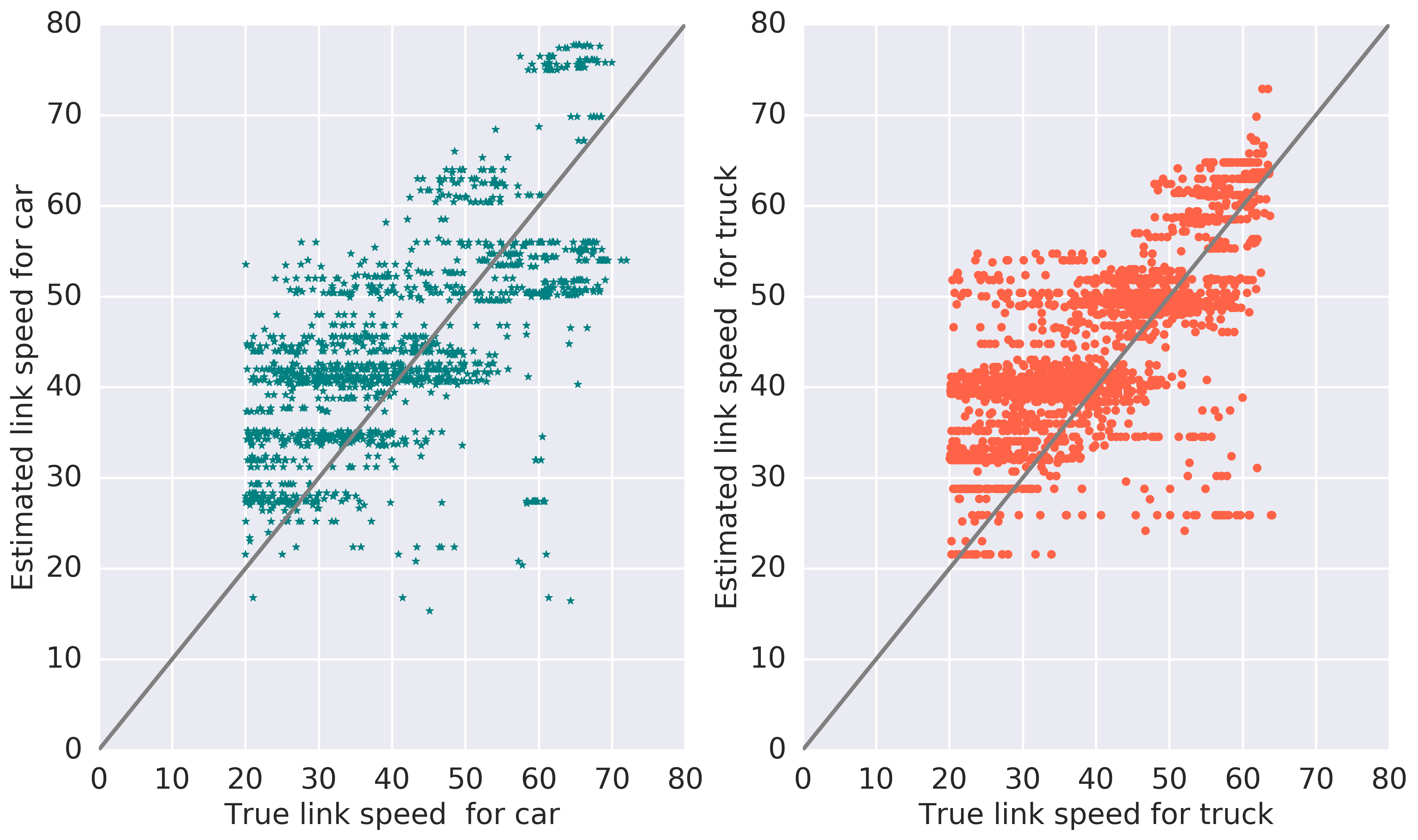}
	\caption{Estimated and observed flow for cars and trucks (unit:miles/hour)}
	\label{fig:lgs}
\end{figure}

The  R-squares between the observed and estimated link speed are $0.45$ and $0.51$ for car and truck, respectively. One can see that the estimation accuracy of traffic speed is not as good as the traffic flow, which is probably attributed to unsatisfactory approximations of the link travel time derivatives $\frac{\partial\bar{\Lambda}(\{\vec{x}_i\}_i)}{\partial \vec{x}_i}$ in dynamic networks. Improving the approximation quality for the link travel time will be left for the future research.

In addition, the estimation uses the randomly generated OD demand as the initial point for the MCDODE framework. If prior knowledge of the OD demand can be obtained from traditional planning models, the estimation accuracy might further be improved. The estimated multi-class OD demand is further used to study the impact of a development project in Pittsburgh, and details are presented in \citet{CARTRUCK}.
\clearpage

\section{Conclusion}
\label{sec:con}
This paper presents a data-driven framework for multi-class dynamic origin-destination demand estimation (MCDODE) using observed traffic flow and travel time data. The traffic data can be any linear combinations of flow characteristics (e.g. counts, time or travel time) across vehicle classes, road segments and time intervals. All the characteristics involved in the MCDODE formulation are vectorized, and the proposed framework is represented on a computational graph. The computational graph can be solved efficiently through a forward-backward algorithm for large-scale MCDODE problems. In the forward iteration, the dynamic traffic assignment problem is solved, and the loss (objective function) is computed through a series of equations.
In the backward iteration, the OD demand is updated by the backpropagation method with the route choice matrix, DAR matrix  and route travel time known from the forward iteration. The MCDODE formulation is solved when the forward-backward algorithm converges.

Practical issues related to MCDODE framework are discussed. We adopt a mesoscopic multi-class traffic simulation package MAC-POSTS to solve for the spatio-temporal path/link flow. The DAR matrix is highly sparse, and thus we propose novel tree-based cumulative curves from MAC-POSTS to construct the sparse DAR matrix. We incorporate multi-day observation data to the MCDODE framework, and different variants of gradient-based solution algorithms are discussed and compared.

The proposed MCDODE framework is examined on a small network as well as a real-world large-scale network. The objective function converges quickly with the Adagrad method. We also conduct the sensitivity analysis of the estimation accuracy with respect to initial OD demand, ``true'' OD demand, and step sizes. Overall, the estimation results are compelling, satisfactory and robust, and the forward-backward algorithm is computationally plausible for large-scale networks. The estimated multi-class dynamic OD demand can help policymakers to better understand the dynamics of OD demand and the spatio-temporal distribution of vehicles in terms of different classes.

In the near future, we plan to improve the estimation accuracy of the MCDODE framework in the following two directions: 1) some prior knowledge about the dynamic OD demand can be used as the initial point for the solution methods. For example, we can construct the DAR matrix from the speed data directly and estimate the dynamic OD without running the simulation. The estimated OD demand can be used as the initial point for the proposed MCDODE framework \citep{ma2018estimating}; 2) the method of approximating the derivatives of link travel time under multi-class flow can be further improved. In addition, we plan to extend this research to calibrate the parameters in route choice models as well as the fundamental diagrams, thanks to the versatile computational graph framework. We also plan to extend this research to estimate the probabilistic distribution of multi-class dynamic OD demand and explore the spatio-temporal characteristics of dynamic OD demand.

\section*{Supplementary materials}
The mesoscopic multi-class traffic simulation package MAC-POSTS \footnote{\url{https://github.com/Lemma1/MAC-POSTS}} and the MCDODE framework \footnote{\url{https://github.com/Lemma1/MAC-POSTS/tree/master/src/examples/mcDODE}} are implemented and open-sourced on Github.

\section*{Acknowledgments}
This research is supported by NSF Award CMMI-1751448, and Carnegie Mellon University's Mobility 21, a National University Transportation Center for Mobility sponsored by the US Department of Transportation.

\cleardoublepage
\bibliography{report}
\cleardoublepage

\end{document}